\newcommand{\bea}{\begin{eqnarray}}
\newcommand{\eea}{\end{eqnarray}}
\newcommand{\s}{S(\lambda_1,\lambda_2,\lambda_3,\lambda_4)}
\def\bi{\begin{itemize}}
\def\ei{\end{itemize}}
\def\bc{\begin{center}}
\def\ec{\end{center}}
\def\C{\hbox{$\mit I$\kern-.7em$\mit C$}}
\def\R{\hbox{$\mit I$\kern-.6em$\mit R$}}
\newcommand{\one}{\mbox{$1 \hspace{-1.0mm}  {\bf l}$}}
\def\tr{\mathrm{tr}}
\def\supp{\textrm{supp\,}}
\def\eig{\textrm{eig\,}}
\newtheorem{theorem}{Theorem}
\newtheorem{lemma}[theorem]{Lemma}
\begin{document}

\title{Nonexistence of maximally entangled mixed states for a fixed spectrum}
\author{Gonzalo Camacho}\email{gonzalo.camacho@dlr.de}
\affiliation{Department High-Performance Computing, Institute of Software Technology, German Aerospace Center (DLR), 51147 Cologne, Germany}
\author{Julio I. de Vicente}\email{jdvicent@math.uc3m.es}
\affiliation{Departamento de Matem\'aticas, Universidad Carlos III de
Madrid, E-28911, Legan\'es (Madrid), Spain}
\affiliation{Instituto de Ciencias Matem\'aticas (ICMAT), E-28049 Madrid, Spain}

\begin{abstract}
The existence of a maximally entangled pure state is a cornerstone result of entanglement theory that has paramount consequences in quantum information theory. A natural generalization of this property is to consider whether a notion of maximal entanglement is possible among all states with the same spectrum (where the aforementioned case of pure states corresponds to the particular choice in which the spectrum is a delta distribution, i.e., rank-1 states). Despite positive evidence in the past that such a notion might exist at least in the case of two-qubit states, it was recently shown in [Phys. Rev. Lett. \textbf{133}, 050202 (2024)] that the answer to the above question is negative. This reference proved this for particular choices of the spectrum in the case of rank-2 two-qubit density matrices. While this settles the problem in general, it still leaves open whether there are other choices of the spectrum outside the case of pure states where a maximally entangled state for a fixed spectrum might exist. In this work we extend this impossibility result to all rank-2 and rank-3 two-qubit states as well as for a large class of eigenvalue distributions in the case where the rank equals four. 
\end{abstract}

\maketitle

\section{Introduction}

The study of entanglement plays a central role in the foundations of quantum mechanics. This interest has been boosted in the last decades due to its applications in quantum information theory, where it is regarded as a resource that provides an advantage in order to implement certain tasks. A crucial question from this point of view is to identify the entangled state that maximizes the performance of a given task. In particular, the analysis becomes simpler if there exists a state that is optimal irrespectively of the particular task, giving thus rise to a universal notion of maximal entanglement.

These questions can be addressed in the context of entanglement theory (see e.g.\ the review articles \cite{review1,review2,review3}), which is formulated as a quantum resource theory \cite{resource}. In quantum information theory entangled states are used by spatially separated parties and, hence, bound to protocols pertaining to the class of local operation and classical communication (LOCC). This immediately induces an operational order in the set of entangled states 
and it makes it possible to quantify entanglement. An entanglement measure is given by any function from the set of states to the non-negative real numbers that preserves the order induced by LOCC manipulation and maps all non-entangled (i.e., separable) states to zero.

Thus, the study of LOCC protocols is a pivotal part of entanglement theory and Nielsen's theorem \cite{nielsen}, which characterizes LOCC convertibility between bipartite pure states, is a central result. In particular, it implies that LOCC induces in general only a partial order in the set of entangled states. That is, there exists pairs of incomparable states under LOCC and there cannot be a unique entanglement measure. Instead, there are different entanglement measures, which can be provided with different operational meanings in relation to the particular kind of advantage one quantifies, and, in general, the notion of some state being more useful than another can only be made task-dependent. This notwithstanding, a salient feature of this theorem is that this notion can be made task-independent for pairs that are related by the partial order. In particular, it turns out that the two-qudit generalized Bell state,
\begin{equation}
|\phi^+_d\rangle=\frac{1}{\sqrt{d}}\sum_{i=1}^d|ii\rangle,
\end{equation}
can be transformed by LOCC into any other two-qudit state (pure or not). According to the above discussion, it then follows that $|\phi^+_d\rangle$ is the maximally entangled two-qudit state. The existence of a universal notion of maximal entanglement is a very pleasant feature of the theory of bipartite entanglement. Whether it is e.g.\ teleportation or entanglement-assisted state discrimination, the above state must be the most useful state for any task to be implemented by LOCC protocols once the underlying local dimension $d$ of our physical system is fixed. This also entails that $|\phi^+_d\rangle$ maximizes all entanglement measures among all two-qudit states and it provides a way to gauge the entanglement content of all other states.

Therefore, in an ideal scenario we would always choose to distribute the maximally entangled state, as it is universally optimal for any LOCC task. Suppose, however, that we have a device that only produces states in some set $S$, which is a strict subset of the set of all 2-qudit states. If $|\phi^+_d\rangle\notin S$, then it is not a priori clear what the most entangled state in $S$ is and it could be that our choice had to be conditioned on the task to be implemented, i.e., on a particular choice of entanglement measure to be optimized. Nevertheless, a universal (i.e., task-independent) notion of maximal entanglement in $S$ might be possible and the principles of entanglement theory clearly dictate how to formulate this. We will say, if it exists, that $\rho\in S$ is \textit{the maximally entangled state in the set} $S$ if for every state $\sigma\in S$, there exists an LOCC map $\Lambda$ such that $\Lambda(\rho)=\sigma$. One can easily envision situations where this problem might arise in practice, in particular because in this case one is bound to mixed states and perfect pure states such as $|\phi^+_d\rangle$ cannot be prepared. A particularly relevant instance of this problem occurs when one studies entanglement generation schemes by Hamiltonian (i.e., joint unitary) evolution on an input separable state such as those considered in \cite{unitary1,unitary2}. If this input state is pure, we would always choose to engineer the unitary evolution to lead to the maximally entangled state $|\phi^+_d\rangle$. However, in the presence of noise the input state will become mixed and, even if we have perfect control of the induced evolution, in this case the output could never be a pure state such as $|\phi^+_d\rangle$. In fact, under the above assumption, we can and only can prepare states that have the same spectrum as the initial separable input state.

To formalize this problem, we consider the simplest case of two-qubit states and we define $S(\lambda_1,\lambda_2,\lambda_3,\lambda_4)$ to be the set of all two-qubit density matrices with ordered eigenvalues given by $(\lambda_1,\lambda_2,\lambda_3,\lambda_4)$
. The question we want to address is for any given spectrum whether there exists a maximally entangled state in $\s$, and, if so, what could this state be. Notice that, according to what we have been discussing, $|\phi^+_2\rangle$ is the maximally entangled state in $S(1,0,0,0)$, so this question is also a natural generalization of the above. Interestingly, Ref.~\cite{mems1} showed that for any choice of eigenvalue distribution the same state in $\s$ is the unique (up to local unitary equivalence) maximizer among all states in $\s$ of three different important entanglement measures: the entanglement of formation \cite{eof}, the relative entropy of entanglement \cite{ree} and the negativity \cite{negativity}. If a maximally entangled state in $\s$ existed, it has to maximize all entanglement measures among $\s$, so this was taken as positive evidence that the family of states found in \cite{mems1} could provide the maximally entangled states in $\s$ and they have been referred to in the literature as MEMSs (maximally entangled mixed states). However, against this evidence, it has been recently proven in \cite{nomems} that this is not the case and that a universal notion of maximal entanglement for a fixed spectrum does not always exist. Namely, it has been shown therein that there is no maximally entangled state in the set $S(\lambda,1-\lambda,0,0)$ whenever $\lambda\in(2/3,1)$. This settles the question in general and shows that there is unfortunately not always a universally optimal choice of output state in the scenario of entanglement generation by Hamiltonian evolution described above. Hence, this choice must be conditioned on a particular task in which one wants to maximize the entanglement-based advantage. Nevertheless, the result of \cite{nomems} still leaves open whether there are other choices of the spectrum for which a maximally entangled state in $\s$ might exist beyond the case of $S(1,0,0,0)$. This is the goal of the present work. 

A possible way to show that there is no maximally entangled state in $\s$ is to find an entanglement measure different to those considered in \cite{mems1} that has a different maximizer than the corresponding MEMS. However, the explicit computation of entanglement measures relies on hard optimization problems and this seems to be a daunting task. The approach followed in \cite{nomems} goes instead to the very definition of maximally entangled state in a set. Since entanglement measures cannot increase under LOCC, the result of \cite{mems1} implies that no isospectral local-unitary-inequivalent state can be transformed by LOCC into the corresponding MEMS. Thus, if we find a single instance of an isospectral local-unitary-inequivalent state that cannot be obtained either by LOCC from the corresponding MEMS, we can conclude that there is no maximally entangled state in $\s$ for that eigenvalue distribution. Nevertheless, the problem of deciding when an LOCC protocol exists transforming a given mixed state into a given mixed target state is also known to be highly non-trivial. In fact, there is no generalization of Nielsen's theorem to this case and only the particular instances of probabilistic \cite{prob} and approximate \cite{approx} transformations among pure states and transformations from pure states to ensembles \cite{ensemble} and mixed states \cite{mixed} have been characterized. Reference \cite{nomems} obtains its result by considering a well-known relaxation of LOCC to the superset of the so-called non-entangling (NE) maps \cite{harrownielsen,brandaoplenio1,brandaoplenio2,brandaodatta,patricia,beyondlocc,lamiregula1,brandaoplenio3,lamiregula2}. In addition to making it possible to find a relatively simple counterexample of a target state that cannot be obtained by NE maps from the MEMS when the spectrum is given by $(\lambda,1-\lambda,0,0)$ with $\lambda\in(2/3,1)$, one obtains as a by-product that a maximally entangled state for a fixed spectrum in this case cannot exist in any resource theory of entanglement. This is because, according to the formalism of quantum resource theories \cite{resource}, the maximal set of transformations into which one can relax LOCC in the single-copy regime is NE. Thus, while, for instance, the resource theory of pure multipartite entanglement has a cleaner picture by considering this relaxation \cite{patricia}, this is not the case for the problem at hand. 

In this paper, we put forward completely different techniques than those used in \cite{nomems} by considering a different but also well-known relaxation of LOCC: the class of separable (SEP) maps \cite{Rains1,Rains2,GG,CD,GW,Hebenstreit1,Hebenstreit2}. This allows us to find isospectral states that cannot be obtained by LOCC from the MEMS of \cite{mems1} for all eigenvalue distributions corresponding to the cases of rank equal to two and three. Thus, we conclude that there cannot exist a maximally entangled state in the set $S(\lambda_1,\lambda_2,\lambda_3,0)$ (if $\lambda_2\neq0$). Unfortunately, these techniques do not seem to extend easily to the full-rank case. For this reason, and also because of that pointed out above of whether this notion is possible in a less restrictive resource theory of entanglement, we also generalize the techniques used in \cite{nomems} based on NE transformations. This makes it possible to write the convertibility question as a linear program \cite{boyd} and perform a systematic analysis. While this does not allow us to exclude the existence of a maximally entangled state in $\s$ for all possible eigenvalue distributions, this reproduces to a large extent the results in the rank-deficient cases under SEP and discards a large class of spectra in the full-rank case even under this more permissive class of operations. Thus, our results suggest that a maximally entangled two-qubit state for a fixed spectrum never exists outside the pure-state case, and, if this is not the case, it can only happen in very particular cases. In addition to this, we believe that our results might be of independent interest for the general and relatively unexplored problem of discerning LOCC convertibility in the case of mixed states.

This article is structured as follows. We set our notation and provide basic definitions in Sec.\ II. In Sec.\ III we consider SEP transformations among mixed states and prove the non-existence of a maximally entangled two-qubit state for a fixed spectrum in the non-pure rank-deficient cases. In Sec.\ IV we present the aforementioned results based on NE convertibility. We conclude with some discussion on the obtained results in Sec.\ V.

\section{Preliminaries}

In this article we only consider two-qubit quantum systems. Hence, their corresponding Hilbert space is $\mathbb{C}^2\otimes\mathbb{C}^2\simeq\mathbb{C}^4$. Thus, any element $|\psi\rangle$ of this space will always be given with respect to the canonical basis $\{|i\rangle\otimes|j\rangle:=|ij\rangle\}$, i.e.,
\begin{equation}
|\psi\rangle=\sum_{i,j=0}^1\psi_{ij}|ij\rangle,
\end{equation}
and analogously for any $4\times 4$ matrix $X$, i.e.\
\begin{equation}
X=\sum_{i,j,k,l=0}^1X_{ijkl}|ij\rangle\langle kl|.
\end{equation}
The set of all two-qubit states is characterized by the set of two-qubit density matrices, which is denoted by $\mathcal{D}$, i.e.,
\begin{equation}
\mathcal{D}=\{\rho\in\mathbb{C}^{4\times4}:\rho^\dag=\rho,\,\rho\geq0,\,\tr\rho=1\}.
\end{equation}
The range, kernel and support of a matrix will be denoted respectively by $R$, $\ker$ and $\supp$. Since all density matrices are Hermitian, for every $\rho\in\mathcal{D}$ it holds that $R(\rho)=\supp(\rho):=(\ker(\rho))^\perp$. Moreover, given any pure-state ensemble decomposition $\{p_i,|\psi_i\rangle\}_{i=1}^n$ of $\rho\in\mathcal{D}$, i.e.,
\begin{equation}
\rho=\sum_{i=1}^np_i|\psi_i\rangle\langle\psi_i|
\end{equation}
with $p_i>0$ $\forall i$ and $\sum_ip_i=1$, it is well-known that it must hold that $|\psi_i\rangle\in R(\rho)$ $\forall i$ (see e.g.\ \cite{PHorodecki97}). We will use $\mathcal{S}\subset\mathcal{D}$ to refer to the set of separable two-qubit density matrices, that is, density matrices $\rho\in\mathcal{D}$ additionally fulfilling that they can be written as
\begin{equation}
\rho=\sum_ip_i|\phi_i\rangle\langle\phi_i|\otimes|\chi_i\rangle\langle\chi_i|,
\end{equation}
for some choice of convex weights $\{p_i\}$ and unit-norm vectors $|\phi_i\rangle,|\chi_i\rangle\in\mathbb{C}^2$ $\forall i$. This set can be characterized by the partial transposition criterion \cite{ppt1,ppt2}. It states that $\rho\in\mathcal{S}$ if and only if (iff) $\rho^\Gamma\geq0$, where the superscript $\Gamma$ stands for the image under partial transposition, i.e., the linear map defined by the following action on the computational basis: $(|ij\rangle\langle kl|)^\Gamma=|kj\rangle\langle il|$. A state that is not separable is said to be entangled. A pure state is described by a rank-1 density matrix and it can be characterized by a (not necessarily normalized) choice of element in $\mathbb{C}^2\otimes\mathbb{C}^2$ that spans its support. In this case $\rho\propto|\psi\rangle\langle\psi|$ and notice that $\rho$ is separable iff $|\psi\rangle=|\phi\rangle\otimes|\chi\rangle$ for some choice of $|\phi\rangle,|\chi\rangle\in\mathbb{C}^2$. Thus, we label accordingly the (not necessarily normalized) elements of $\mathbb{C}^2\otimes\mathbb{C}^2$ as separable or entangled.   

State transformations in quantum theory are given by completely positive and trace-preserving (CPTP) maps. Any such map $\Lambda:\mathcal{D}\to\mathcal{D}$ admits a so-called Kraus representation,
\begin{equation}\label{Kraus}
\Lambda(\cdot)=\sum_{i=1}^kK_i\cdot K_i^\dag,
\end{equation}
where $\{K_i\}\subset\mathbb{C}^{4\times4}$ satisfying $\sum_iK^\dag_iK_i=\one$. For these and other facts about the theory of CPTP maps see e.g.\ \cite{watrous}. In the following we will consider particular classes of CPTP maps. The class of LOCC maps is notoriously involved to define and the reader is referred to \cite{locc}. Here we will only mention that local-unitary (LU) transformations, i.e., $\Lambda:\mathcal{D}\to\mathcal{D}$ such that
\begin{equation}
\Lambda(\cdot)=U_A\otimes U_B\cdot U_A^\dag\otimes U_B^\dag
\end{equation}
with $U_A,U_B\in\mathbb{C}^{2\times2}$ unitary matrices, are always LOCC. Furthermore, notice that these transformations are always invertible by another LU transformation, and, hence, by LOCC. Therefore, states related by local unitaries are always interconvertible by LOCC and, consequently, equivalent from the point of view of entanglement theory. Hence, whenever we speak about the set of entangled states and we make claims such as a given state being the unique entangled state having some property, it should be understood that we are speaking about the corresponding equivalence classes under this equivalence relation. It is known that the set of LOCC maps is a strict subset of the set of SEP maps, which, in turn, is a strict subset of the set of NE maps (see e.g.\ \cite{locc}). A CPTP map $\Lambda:\mathcal{D}\to\mathcal{D}$ is in the class of SEP maps if it admits a Kraus representation as in Eq.\ (\ref{Kraus}) such that $K_i=A_i\otimes B_i$ $\forall i$ for some matrices $\{A_i\},\{B_i\}\subset\mathbb{C}^{2\times2}$. A CPTP map $\Lambda:\mathcal{D}\to\mathcal{D}$ is NE if $\Lambda(\rho)\in\mathcal{S}$ $\forall\rho\in\mathcal{S}$. These two sets of maps are topologically closed and, therefore, if $\rho\in\mathcal{D}$ can be transformed to arbitrary precision to a state $\sigma\in\mathcal{D}$ by SEP (NE) maps, then there must exist a SEP (NE) map $\Lambda:\mathcal{D}\to\mathcal{D}$ such that $\Lambda(\rho)=\sigma$.

Given any possible spectrum for a density matrix in $\mathcal{D}$, we will denote by $\vec{\lambda}=(\lambda_1,\lambda_2,\lambda_3,\lambda_4)$ the corresponding 4-tuple of eigenvalues arranged in non-increasing order (which must then satisfy $\sum_i\lambda_i=1$ and $\lambda_i\geq0$ $\forall i$) and, as discussed in the introduction,
\begin{equation}
\s=\{\rho\in\mathcal{D}:\eig(\rho)=\{\lambda_1,\lambda_2,\lambda_3,\lambda_4\}\},
\end{equation}  
where $\eig(\cdot)$ stands for the spectrum of a matrix. The elements of the Bell basis of $\mathbb{C}^2\otimes\mathbb{C}^2$ are denoted by
\begin{align}\label{Bellvector}
|\Phi_1\rangle=\frac{1}{\sqrt{2}}(|00\rangle+|11\rangle)&,\quad |\Phi_2\rangle=\frac{1}{\sqrt{2}}(|00\rangle-|11\rangle),\nonumber\\
|\Phi_3\rangle=\frac{1}{\sqrt{2}}(|10\rangle+|01\rangle)&,\quad |\Phi_4\rangle=\frac{1}{\sqrt{2}}(|10\rangle-|01\rangle),
\end{align}
and the corresponding density matrices by~$\Phi_i=|\Phi_i\rangle\langle\Phi_i|$ ($i\in\{1,2,3,4\}$). The MEMSs of \cite{mems1} that maximize several entanglement measures in $\s$ are then given by
\begin{equation}\label{mems}
\rho_{\vec{\lambda}}=\lambda_1\Phi_1+\lambda_2|01\rangle\langle01|+\lambda_3\Phi_2+\lambda_4|10\rangle\langle10|:=\sum_{i=1}^4\lambda_i\xi_i,
\end{equation}
where we use the shorthand notation
\begin{equation}
\xi_1=\Phi_1,\quad\xi_2=|01\rangle\langle01|,\quad\xi_3=\Phi_2,\quad\xi_4=|10\rangle\langle10|.
\end{equation}
As explained in the introduction, if there exists a choice of $\vec{\lambda}$ and $\sigma\in\s$ such that $\rho_{\vec{\lambda}}$ cannot be transformed by SEP or NE into $\sigma$, then we can conclude that there is no maximally entangled state in $\s$. We will often consider states in $\mathcal{D}$ that are diagonal in the Bell basis, i.e., 
\begin{equation}\label{Belldiagonal}
\sigma=\sum_{i=1}^4p_i\Phi_i,
\end{equation}
where $\sum_ip_i=1$ and $p_i\geq0$ $\forall i$. These states have $\{p_i\}$ as their spectrum and they are known to be entangled iff $\max_ip_i>1/2$ \cite{belldiagonalstates}. In fact, it is known \cite{isotropic} that $\tr(\rho\Phi_i)\leq1/2$ must hold for any $i\in\{1,2,3,4\}$ if $\rho\in\mathcal{S}$. In particular, the main result of \cite{nomems} is that there exists no NE map transforming $\rho_{(\lambda,1-\lambda,0,0)}$ into $\lambda\Phi_1+(1-\lambda)\Phi_3$ when $\lambda\in(2/3,1)$, implying that there is no maximally entangled state in $S(\lambda,1-\lambda,0,0)$ under the above hypothesis on the parameter $\lambda$. 

It should be noted that it follows from the results of \cite{mems1} that all elements of $\s$ are separable iff 
\begin{equation}\label{absolutesep}
\lambda_1 - \lambda_3 - 2 \sqrt{\lambda_2 \lambda_4} \leq 0.
\end{equation}
Thus, the question of the existence of a maximally entangled state in $\s$ only makes sense under the assumption that the above inequality is violated. This is always the case when the rank equals 2. The same happens when the rank equals 3 except in the case $\lambda_1=\lambda_3$ (i.e., $\vec{\lambda}=(1/3,1/3,1/3,0)$).

\section{Impossibility results via SEP transformations}

To study SEP transformations among rank-deficient states in $\mathcal{D}$, we will use certain properties of strict subspaces of $\mathbb{C}^2\otimes\mathbb{C}^2$ that will correspond to the supports of the involved density matrices (which, as mentioned before, is the same as their ranges). In particular, notice first that if $V$ is a 2-dimensional subspace of $\mathbb{C}^{2}\otimes\mathbb{C}^2$ then either all states in $V$ are separable or there is only one separable state or there are only 2 separable states \cite{vidalsanpera}. If $V=\textrm{span}\{|\chi_1\rangle,\ldots,|\chi_m\rangle\}$ and $A,B\in\mathbb{C}^{2\times2}$, we denote by $(A\otimes B)V:=\textrm{span}\{A\otimes B|\chi_1\rangle,\ldots,A\otimes B|\chi_m\rangle\}$.

\begin{lemma}\label{subspace1}
If $V$ is a 2-dimensional subspace of $\mathbb{C}^{2}\otimes\mathbb{C}^2$ and $A,B\in\mathbb{C}^{2\times2}$ are invertible, then $(A\otimes B)V$ is a 2-dimensional subspace that has the property of having all states separable or just one or just two iff $V$ has the same property.
\end{lemma}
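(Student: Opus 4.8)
The plan is to exploit that an invertible product operator $A\otimes B$ acts as a linear isomorphism of $\mathbb{C}^2\otimes\mathbb{C}^2$ which, crucially, sends separable (i.e.\ product) vectors to separable vectors and nothing else. First I would note that, since $A$ and $B$ are invertible, $A\otimes B$ is invertible with inverse $A^{-1}\otimes B^{-1}$; hence its restriction to $V$ is a linear isomorphism onto $W:=(A\otimes B)V$, and in particular $W$ is again a 2-dimensional subspace. This already disposes of the dimension claim.

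Next I would establish the key structural fact: a nonzero $|v\rangle\in\mathbb{C}^2\otimes\mathbb{C}^2$ is separable if and only if $(A\otimes B)|v\rangle$ is separable. Indeed, if $|v\rangle=|\phi\rangle\otimes|\chi\rangle$ then $(A\otimes B)|v\rangle=(A|\phi\rangle)\otimes(B|\chi\rangle)$, which is again a product vector and is nonzero because $A,B$ are invertible; applying the same reasoning to $A^{-1}\otimes B^{-1}$ yields the converse. Consequently the isomorphism $|v\rangle\mapsto(A\otimes B)|v\rangle$ induces a bijection between the separable rays (one-dimensional subspaces spanned by a product vector) lying inside $V$ and those lying inside $W$.

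Finally I would invoke the trichotomy of \cite{vidalsanpera}: the set of separable rays inside any 2-dimensional subspace of $\mathbb{C}^2\otimes\mathbb{C}^2$ either exhausts the whole associated projective line, or reduces to exactly one ray, or to exactly two rays. Since the bijection above preserves separability in both directions, it transports each of these three alternatives for $V$ to the very same alternative for $W$, and conversely, which is precisely the claimed equivalence.

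I do not anticipate a serious obstacle; the only point needing a little care is that ``just one'' and ``just two'' refer to product vectors counted \emph{projectively} (up to a nonzero scalar), so the correspondence must be phrased in terms of rays rather than individual vectors, and one must separately observe that ``all states separable'' is likewise carried over by the isomorphism — but this is immediate once the ray bijection is in hand.
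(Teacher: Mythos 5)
Your argument is correct and is essentially the paper's own proof, just spelled out in more detail: invertibility of $A\otimes B$ gives the dimension claim, and the fact that $|\psi\rangle$ is separable iff $A\otimes B|\psi\rangle$ is separable transports the count of product rays (none missing, one, or two) between $V$ and $(A\otimes B)V$. No gaps; the projective counting caveat you flag is handled the same way implicitly in the paper.
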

\begin{proof}
This is obvious from the fact that $A$ and $B$ are invertible (and, hence, so is $A\otimes B$ and $(A\otimes B)V$ must be 2-dimensional as well) and that in this case $|\psi\rangle\in\mathbb{C}^{2}\otimes\mathbb{C}^2$ is separable iff so is $A\otimes B|\psi\rangle$.
\end{proof}

Three-dimensional subspaces of $\mathbb{C}^{2}\otimes\mathbb{C}^2$ can be catalogued by whether their (one-dimensional) orthogonal complement is spanned by a separable or an entangled state. We have a similar property as in the previous case.
\begin{lemma}\label{subspace2}
If $V$ is a 3-dimensional subspace of $\mathbb{C}^{2}\otimes\mathbb{C}^2$ and $A,B\in\mathbb{C}^{2\times2}$ are invertible, then $(A\otimes B)V$ is a 3-dimensional subspace that has the property of having a separable or an entangled orthogonal complement iff $V$ has the same property.
\end{lemma}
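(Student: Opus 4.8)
The plan is to reduce the statement to the behaviour of orthogonal complements under invertible local operators, essentially the same ingredient already used in the proof of Lemma~\ref{subspace1}. First I would record the elementary linear-algebra fact that for any invertible $M\in\mathbb{C}^{4\times4}$ and any subspace $V$ one has $(MV)^\perp=(M^{-1})^\dagger V^\perp$; this is immediate from $\langle w,Mv\rangle=\langle M^\dagger w,v\rangle$, so $w\perp MV$ iff $M^\dagger w\perp V$ iff $w\in(M^\dagger)^{-1}V^\perp=(M^{-1})^\dagger V^\perp$. Since $A$ and $B$ are invertible, $A\otimes B$ is invertible and therefore $(A\otimes B)V$ is again $3$-dimensional, with a $1$-dimensional orthogonal complement.

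Applying the identity with $M=A\otimes B$ and using the fact that the adjoint and inverse respect the tensor structure, $(A\otimes B)^{-1\dagger}=(A^{-1})^\dagger\otimes(B^{-1})^\dagger$, I obtain
\begin{equation}
\big((A\otimes B)V\big)^\perp=\big((A^{-1})^\dagger\otimes(B^{-1})^\dagger\big)V^\perp.
\end{equation}
Writing $V^\perp=\textrm{span}\{|\psi\rangle\}$, this shows that the orthogonal complement of $(A\otimes B)V$ is spanned by $(A^{-1})^\dagger\otimes(B^{-1})^\dagger|\psi\rangle$.

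To finish, I would invoke exactly the observation used in Lemma~\ref{subspace1}: a vector in $\mathbb{C}^2\otimes\mathbb{C}^2$ is separable iff its image under an invertible local operator $C\otimes D$ (with $C,D\in\mathbb{C}^{2\times2}$) is separable. Taking $C=(A^{-1})^\dagger$ and $D=(B^{-1})^\dagger$, which are invertible because $A,B$ are, yields that $(A^{-1})^\dagger\otimes(B^{-1})^\dagger|\psi\rangle$ is separable iff $|\psi\rangle$ is; equivalently, since the complement is one-dimensional, $(A\otimes B)V$ has a separable (resp.\ entangled) orthogonal complement iff $V$ does. I do not expect any genuine obstacle here, as every step is a one-line verification; the only points requiring a little care are the correct placement of the adjoint and inverse in $(MV)^\perp=(M^{-1})^\dagger V^\perp$ and the factorization of that operator as a tensor product of invertible $2\times2$ matrices, so that the separability-preservation argument of Lemma~\ref{subspace1} applies verbatim.
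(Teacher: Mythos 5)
Your proof is correct and follows essentially the same route as the paper: transform the one-dimensional orthogonal complement under the invertible local operator and invoke the separability-preservation fact from Lemma~\ref{subspace1}. Your care with the adjoint, i.e.\ writing $\bigl((A\otimes B)V\bigr)^\perp=\bigl((A^{-1})^\dagger\otimes(B^{-1})^\dagger\bigr)V^\perp$, is in fact slightly more precise than the paper's one-line statement, and since the operator is still an invertible local one the conclusion is unaffected.
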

\begin{proof}
The proof follows the same reasoning as the previous lemma by noticing that if $|\psi\rangle\in V^\perp$, then $A^{-1}\otimes B^{-1}|\psi\rangle\in ((A\otimes B)V)^\perp$.
\end{proof}

We can now state and prove the main results in this section. Given the different structures described above, we consider the cases of rank equal to two and rank equal to three separately. However, the main idea behind the proofs is the same.

\begin{theorem}\label{th1SEP}
There is no maximally entangled state in $S(\lambda,1-\lambda,0,0)$ for all $\lambda\in[1/2,1)$.
\end{theorem}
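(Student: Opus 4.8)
The plan is to exploit the fact that the support of the only candidate for the maximally entangled state, namely the MEMS $\rho:=\rho_{(\lambda,1-\lambda,0,0)}=\lambda\Phi_1+(1-\lambda)|01\rangle\langle01|$, is a $2$-dimensional subspace of a very restrictive type, and to exhibit an isospectral \emph{entangled} state whose support is of the opposite extreme type, which therefore cannot be reached from $\rho$ by any SEP map. Set $V=R(\rho)=\mathrm{span}\{|\Phi_1\rangle,|01\rangle\}$. Writing a generic vector as $a|\Phi_1\rangle+b|01\rangle$ and imposing that the determinant of its $2\times2$ coefficient matrix vanish yields $a=0$, so $V$ contains a single separable state ($|01\rangle$); in the trichotomy recalled just before Lemma~\ref{subspace1}, $V$ is thus of the ``exactly one separable state'' type. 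As target I would take
\[
\sigma_\lambda=\tfrac{\lambda}{3}\,|v\rangle\langle v|+(1-\lambda)\,|00\rangle\langle00|,\qquad |v\rangle=|01\rangle+|10\rangle+|11\rangle,
\]
noting $\langle v|00\rangle=0$ and $\langle v|v\rangle=3$. Then: (i) $\sigma_\lambda$ has eigenvalues $(\lambda,1-\lambda,0,0)$, so $\sigma_\lambda\in S(\lambda,1-\lambda,0,0)$; (ii) its support $W=R(\sigma_\lambda)=\mathrm{span}\{|00\rangle,|v\rangle\}=\mathrm{span}\{|00\rangle,|{+}{+}\rangle\}$ contains exactly two separable states ($|00\rangle$ and $|{+}{+}\rangle$, again by the determinant criterion), i.e.\ it is of the ``exactly two separable states'' type; and (iii) a direct partial transposition computation gives $\det(\sigma_\lambda^\Gamma)<0$ for every $\lambda\in(0,1)$, so $\sigma_\lambda$ is entangled on the whole interval $[1/2,1)$. (The simpler Bell-diagonal candidate $\lambda\Phi_1+(1-\lambda)\Phi_2$ also has a support of the right type, but it is separable precisely at $\lambda=1/2$; this is why the asymmetric choice above is convenient.)

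Next I would show that no SEP map takes $\rho$ to $\sigma_\lambda$. Suppose, for contradiction, that $\Lambda(\cdot)=\sum_i (A_i\otimes B_i)(\cdot)(A_i\otimes B_i)^\dagger$ is such a map. Since $\rho$ has full rank on $V$, one gets $W=R(\sigma_\lambda)=\sum_i (A_i\otimes B_i)V$, and in particular $(A_i\otimes B_i)V\subseteq W$ for every $i$. If, for some $i$, both $A_i$ and $B_i$ were invertible, then $(A_i\otimes B_i)V$ would be $2$-dimensional and hence equal to the $2$-dimensional space $W$; but Lemma~\ref{subspace1} would then force $W$ to be of the same type as $V$, i.e.\ to contain exactly one separable state, contradicting (ii). Therefore for every $i$ at least one of $A_i,B_i$ has rank at most one, which means $A_i\otimes B_i$ maps all of $\mathbb{C}^2\otimes\mathbb{C}^2$ into a subspace of the form $|u_i\rangle\otimes\mathbb{C}^2$ or $\mathbb{C}^2\otimes|w_i\rangle$. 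Consequently each $(A_i\otimes B_i)\rho(A_i\otimes B_i)^\dagger$ is a positive operator supported on such a subspace and is therefore separable, so $\sigma_\lambda=\sum_i (A_i\otimes B_i)\rho(A_i\otimes B_i)^\dagger$ would be separable, contradicting (iii).

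To conclude: since LOCC maps form a subset of SEP maps, there is in particular no LOCC map taking $\rho$ to $\sigma_\lambda$. By the result of \cite{mems1}, $\rho$ is the unique state in $S(\lambda,1-\lambda,0,0)$ (up to local unitaries) that maximizes the entanglement measures considered there; hence if a maximally entangled state in $S(\lambda,1-\lambda,0,0)$ existed, it would have to be local-unitary equivalent to $\rho$ and, being maximally entangled, would have to be convertible by LOCC into the isospectral state $\sigma_\lambda$. Since no such conversion exists, there is no maximally entangled state in $S(\lambda,1-\lambda,0,0)$ for any $\lambda\in[1/2,1)$.

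I expect the only real obstacle to be the choice of the target state: one needs an isospectral entangled state whose support is of the ``exactly two separable states'' type and that stays entangled over the whole interval $[1/2,1)$—in particular at the endpoint $\lambda=1/2$, where a rank-two state with this spectrum is half a rank-two projector and the naive symmetric entangled $2$-dimensional subspaces (such as $\mathrm{span}\{|00\rangle,|11\rangle\}$) are spanned by orthogonal product vectors and hence carry only separable states. Once a suitable target is fixed, the rest is driven entirely by Lemma~\ref{subspace1} and the elementary fact that a rank-deficient local operator $A\otimes B$ collapses everything onto a product-type subspace; the only genuine computations are the two $2\times2$ determinant checks classifying $V$ and $W$ and the $4\times4$ positivity check for $\sigma_\lambda^\Gamma$.
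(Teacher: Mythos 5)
Your proof is correct, and it rests on the same core mechanism as the paper's: relax LOCC to SEP and use Lemma~\ref{subspace1} to show that a two-dimensional support containing exactly one product state cannot be mapped onto one containing exactly two. Your two deviations both check out and are worth noting. First, instead of the paper's target $\rho_\lambda(\epsilon)$ (an $\epsilon$-perturbation of the MEMS whose entanglement for small $\epsilon$ is argued via closedness of the separable set), you take the fixed explicit state $\sigma_\lambda$; your checks are right: the spectrum is $(\lambda,1-\lambda,0,0)$, the support $\mathrm{span}\{|00\rangle,|{+}{+}\rangle\}$ contains exactly two product states, and $\det(\sigma_\lambda^\Gamma)=-(\lambda/3)^3\left(1-\tfrac{2\lambda}{3}\right)<0$ on all of $(0,1)$, so the target is entangled on the whole interval including the endpoint $\lambda=1/2$. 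Second, where the paper closes the contradiction by extracting, from entanglement of the target, a Kraus term with both local factors invertible whose images of $|\Phi_1\rangle$ and $|01\rangle$ are linearly independent and hence span the target support, you instead run a dichotomy: if some $A_i\otimes B_i$ is invertible, then $(A_i\otimes B_i)V=W$ by dimension counting, contradicting Lemma~\ref{subspace1}; if every Kraus term has a rank-deficient factor, each output term is supported in $|u_i\rangle\otimes\mathbb{C}^2$ or $\mathbb{C}^2\otimes|w_i\rangle$ and is therefore separable, contradicting entanglement of $\sigma_\lambda$. This sidesteps the paper's step of arguing that the image of $|\Phi_1\rangle$ is entangled, at the price of the range identity $R(\Lambda(\rho))=\sum_i(A_i\otimes B_i)R(\rho)$, which you justify correctly (the paper obtains the same containment via ensemble decompositions). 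Your concluding logic, uniqueness of the measure maximizer from \cite{mems1} together with LOCC being contained in SEP, is exactly the paper's.
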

\begin{proof}
Let $\rho_\lambda=\rho_{(\lambda,1-\lambda,0,0)}$ (cf.\ Eq.\ (\ref{mems})). As already explained, it is enough to prove that for every given $\lambda\in[1/2,1)$ there exists no SEP map that transforms $\rho_\lambda$ into a particular choice of state in $S(\lambda,1-\lambda,0,0)$. We establish this choice in the following. For a convenient choice of $\epsilon\in\mathbb{R}$ to be specified later, we denote by $|\Phi_1(\epsilon)\rangle$ the properly normalized state proportional to $|\Phi_1\rangle+\epsilon|10\rangle$. Additionally, we use $\Phi_1(\epsilon)=|\Phi_1(\epsilon)\rangle\langle\Phi_1(\epsilon)|$ and define
\begin{equation}
\rho_\lambda(\epsilon)=\lambda\Phi_1(\epsilon)+(1-\lambda)|01\rangle\langle01|.
\end{equation}
Notice that $\rho_\lambda(\epsilon)$ is isospectral to $\rho_\lambda=\rho_\lambda(0)$ for every $\epsilon$ and that for every choice of $\lambda\in[1/2,1)$ there exists a choice of $\epsilon>0$ sufficiently small such that $\rho_\lambda(\epsilon)$ is entangled as well. This is due to the well-known fact that the set of separable states is topologically closed (see e.g.\ \cite{PHorodecki97}). Using such a choice of $\epsilon$ for any given $\lambda$ we prove the theorem by showing that $\rho_\lambda$ cannot be transformed to $\rho_\lambda(\epsilon)$ by SEP. Assume then for a contradiction that there exists a SEP map $\Lambda:\mathcal{D}\to\mathcal{D}$ such that 
\begin{equation}\label{th1SEP:assumption}
\Lambda(\rho_\lambda)=\rho_\lambda(\epsilon)
\end{equation}
and let $V_\epsilon=\textrm{span}\{|\Phi_1(\epsilon)\rangle,|01\rangle\}=R(\rho_\lambda(\epsilon))$, which has the property of containing exactly two separable states for all $\epsilon>0$ (see e.g.\ \cite{slocc}). On the other hand, $V_0=\textrm{span}\{|\Phi_1\rangle,|01\rangle\}=R(\rho_\lambda)$ contains exactly one separable state \cite{slocc}. By linearity, Eq.\ (\ref{th1SEP:assumption}) enforces that
\begin{equation}
\lambda\Lambda(\Phi_1)+(1-\lambda)\Lambda(|01\rangle\langle01|)=\lambda\Phi_1(\epsilon)+(1-\lambda)|01\rangle\langle01|.
\end{equation}
Since  $\Lambda(\Phi_1)$ and $\Lambda(|01\rangle\langle01|)$ are positive semidefinite, this implies that  $V_\epsilon^\perp\subseteq \textrm{ker}(\Lambda(\Phi_1))$ and $V_\epsilon^\perp\subseteq\textrm{ker}(\Lambda(|01\rangle\langle01|))$. Consequently, we have that $R(\Lambda(\Phi_1))\subseteq V_\epsilon$ and $R(\Lambda(|01\rangle\langle01|))\subseteq V_\epsilon$. If there exists a SEP map $\Lambda(\cdot)=\sum_iA_i\otimes B_i\cdot A_i^\dagger\otimes B_i^\dagger$ fulfilling the last condition, then
\begin{equation}
\left\{|\psi_i\rangle:=\frac{A_i\otimes B_i|\Phi_1\rangle}{||A_i\otimes B_i|\Phi_1\rangle||}\right\},\quad \left\{|\phi_i\rangle:=\frac{A_i\otimes B_i|01\rangle}{||A_i\otimes B_i|01\rangle||}\right\}
\end{equation}
give pure states respectively in ensemble decompositions of $\Lambda(\Phi_1)$ and $\Lambda(|01\rangle\langle01|)$, and, therefore, they all must belong to their respective ranges and, hence, to $V_\epsilon$ (whenever the above norms are non-zero). Moreover, given that $\rho_\lambda(\epsilon)$ is entangled, so must be $\Lambda(\Phi_1)$ and there must exist at least one value of $i$ (take without loss of generality $i=1$) such that $A_1$ and $B_1$ are invertible and $|\psi_1\rangle\neq0$ is an entangled state. The above invertibility property guarantees that $|\phi_1\rangle\neq0$, which must be separable. Thus, $|\psi_1\rangle$ and $|\phi_1\rangle$ are two different (and linearly independent) vectors in $V_\epsilon$ and, consequently, $V_\epsilon=\textrm{span}\{|\psi_1\rangle, |\phi_1\rangle\}$. However, this means that $V_\epsilon=(A_1\otimes B_1)V_0$ and, as per Lemma \ref{subspace1}, we have reached a contradiction.
\end{proof} 

\begin{theorem}\label{th2SEP}
There is no maximally entangled state in $S(\lambda_1,\lambda_2,\lambda_3,0)$ for any 4-tuple of ordered eigenvalues $\vec\lambda=(\lambda_1,\lambda_2,\lambda_3,0)$ with $\lambda_1\neq\lambda_3>0$ (cf.\ Eq.\ (\ref{absolutesep})).
\end{theorem}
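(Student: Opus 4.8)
The plan is to mimic the proof of Theorem \ref{th1SEP}, replacing the count of separable states in a $2$-plane by the separable-versus-entangled dichotomy for the orthogonal complement of a $3$-plane supplied by Lemma \ref{subspace2}. Write $\rho_{\vec\lambda}=\lambda_1\Phi_1+\lambda_2|01\rangle\langle01|+\lambda_3\Phi_2$ for the $\lambda_4=0$ instance of Eq.\ (\ref{mems}); its range is $V_0:=\textrm{span}\{|\Phi_1\rangle,|01\rangle,|\Phi_2\rangle\}=\textrm{span}\{|00\rangle,|11\rangle,|01\rangle\}$, whose orthogonal complement $\textrm{span}\{|10\rangle\}$ is \emph{separable}. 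A direct computation of the partial transpose shows that $\rho_{\vec\lambda}$ is entangled precisely when $\lambda_1\neq\lambda_3$, so it is entangled under our hypothesis.

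First I would build the target isospectral state. Let $U_\delta$ be the unitary acting as the identity on $\textrm{span}\{|00\rangle,|11\rangle\}$ and as the rotation on $\textrm{span}\{|01\rangle,|10\rangle\}$ sending $|10\rangle$ to the normalized vector proportional to $|10\rangle+\delta|01\rangle$, and set $\sigma_\delta:=U_\delta\rho_{\vec\lambda}U_\delta^\dag$. Then $\sigma_\delta\in S(\lambda_1,\lambda_2,\lambda_3,0)$ for every $\delta$, the orthogonal complement of its range $R(\sigma_\delta)=U_\delta V_0$ equals $\textrm{span}\{|10\rangle+\delta|01\rangle\}$, which is \emph{entangled} for all $\delta\neq0$, and, since $\sigma_0=\rho_{\vec\lambda}$ is entangled and the set of separable states is closed (see e.g.\ \cite{PHorodecki97}), $\sigma_\delta$ remains entangled for all sufficiently small $\delta$. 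I would fix such a $\delta>0$ and claim that $\rho_{\vec\lambda}$ cannot be transformed into $\sigma_\delta$ by SEP.

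To prove the claim I would argue by contradiction, assuming a SEP map $\Lambda(\cdot)=\sum_jA_j\otimes B_j\cdot A_j^\dag\otimes B_j^\dag$ with $\Lambda(\rho_{\vec\lambda})=\sigma_\delta$. By linearity and positivity, $R(\Lambda(\xi_i))\subseteq R(\sigma_\delta)$ for $i\in\{1,2,3\}$. Since $\xi_2=|01\rangle\langle01|$ is a product state, $\Lambda(\xi_2)$ is separable, so, $\sigma_\delta=\lambda_1\Lambda(\Phi_1)+\lambda_2\Lambda(\xi_2)+\lambda_3\Lambda(\Phi_2)$ being entangled, at least one of $\Lambda(\Phi_1)$ and $\Lambda(\Phi_2)$ is entangled; denote it $\Lambda(\Phi)$ with $\Phi\in\{\Phi_1,\Phi_2\}$. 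Exactly as in Theorem \ref{th1SEP}, the ensemble of $\Lambda(\Phi)$ inherited from the Kraus operators must contain an entangled vector $A_j\otimes B_j|\Phi\rangle\neq0$, forcing both $A_j$ and $B_j$ to be invertible. Then $A_j\otimes B_j$ maps the linearly independent triple $\{|\Phi_1\rangle,|01\rangle,|\Phi_2\rangle\}$ to a linearly independent triple whose members lie, respectively, in $R(\Lambda(\Phi_1)),R(\Lambda(\xi_2)),R(\Lambda(\Phi_2))\subseteq R(\sigma_\delta)$, whence $(A_j\otimes B_j)V_0=R(\sigma_\delta)$ by a dimension count. This contradicts Lemma \ref{subspace2}, because $V_0$ has a separable orthogonal complement while $R(\sigma_\delta)$ has an entangled one.

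I expect the only genuinely non-routine step to be the construction of $\sigma_\delta$: one needs a single isospectral state that is at once entangled and has a range whose orthogonal complement is entangled. The unitary conjugation above handles this cleanly — isospectrality and the range condition are immediate, and persistence of entanglement for small $\delta$ follows from the openness of the entangled set. Everything afterwards is a direct adaptation of the rank-2 argument; if anything it is slightly more streamlined, since invertibility of a single Kraus pair already forces the equality of $3$-planes, with no need to track which of the three images are separable or entangled.
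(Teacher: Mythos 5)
Your proof is correct and follows essentially the same route as the paper: an isospectral target whose range has an entangled orthogonal complement, plus the invertible-Kraus-pair argument and Lemma \ref{subspace2} to reach the contradiction. The only (immaterial) difference is the construction of the target state — you rotate $|10\rangle$ towards $|01\rangle$ by a global unitary, whereas the paper tilts $|\Phi_1\rangle$ towards $|10\rangle$ via $|\Phi_1(\epsilon)\rangle\propto|\Phi_1\rangle+\epsilon|10\rangle$; both yield an isospectral entangled state with entangled kernel, so the rest of the argument is identical.
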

\begin{proof}
The proof is very similar to the case of rank equal to 2 and we use the same notation as therein, where now 
\begin{equation}
\rho_{\vec\lambda}(\epsilon)=\lambda_1\Phi_1(\epsilon)+\lambda_2|01\rangle\langle01|+\lambda_3\Phi_2,
\end{equation}
which still has the property of being isospectral to $\rho_{\vec\lambda}=\rho_{\vec\lambda}(0)$. For the same reasons as before, for every choice of $\vec\lambda$ such that $\rho_{\vec\lambda}$ is entangled (i.e., $\lambda_1\neq\lambda_3$), there exists a choice of $\epsilon>0$ sufficiently small such that $\rho_{\vec\lambda}(\epsilon)$ is entangled as well. For such a choice of $\epsilon$ for any given $\vec\lambda$, assume then for a contradiction that there exists a SEP map $\Lambda:\mathcal{D}\to\mathcal{D}$ such that 
\begin{equation}
\Lambda(\rho_{\vec\lambda})=\rho_{\vec\lambda}(\epsilon)
\end{equation}
and let $V_\epsilon=\textrm{span}\{|\Phi_1(\epsilon)\rangle,|01\rangle,|\Phi_2\rangle\}=R(\rho_{\vec\lambda}(\epsilon))$, which has the property of having an entangled orthogonal complement for all $\epsilon>0$ (spanned by $\epsilon|00\rangle+\epsilon|11\rangle-\sqrt{2}|10\rangle$)\footnote{To see this, notice that $$\epsilon|00\rangle+\epsilon|11\rangle-\sqrt{2}|10\rangle=(A_\epsilon\otimes\one)(|00\rangle+|11\rangle)$$ with $$A_\epsilon=\left(
                    \begin{array}{cc}
                      \epsilon & 0 \\
                      -\sqrt{2} & \epsilon \\
                    \end{array}
                  \right).$$ Hence, $A_\epsilon$ is invertible $\forall\epsilon>0$.}. On the other hand, the orthogonal complement of $V_0=\textrm{span}\{|\Phi_1\rangle,|01\rangle,|\Phi_2\rangle\}=R(\rho_{\vec\lambda})$ is spanned by the separable state $|10\rangle$. By linearity of $\Lambda$, like before, it must hold that $R(\Lambda(\Phi_1)),R(\Lambda(|01\rangle\langle01|)),R(\Lambda(\Phi_2))\subseteq V_\epsilon$ and if there exists a SEP map $\Lambda(\cdot)=\sum_iA_i\otimes B_i\cdot A_i^\dagger\otimes B_i^\dagger$ fulfilling the above condition, then
\begin{align}
&\left\{|\psi_i\rangle:=\frac{A_i\otimes B_i|\Phi_1\rangle}{||A_i\otimes B_i|\Phi_1\rangle||}\right\},\nonumber\\
&\left\{|\phi_i\rangle:=\frac{A_i\otimes B_i|01\rangle}{||A_i\otimes B_i|01\rangle||}\right\},\nonumber\\
&\left\{|\chi_i\rangle:=\frac{A_i\otimes B_i|\Phi_2\rangle}{||A_i\otimes B_i|\Phi_2\rangle||}\right\}
\end{align}
must all belong to $V_\epsilon$ (whenever the above norms are non-zero). Furthermore, given that $\rho_\lambda(\epsilon)$ is entangled, there must exist at least one value of $i$ (take without loss of generality $i=1$) such that $A_1$ and $B_1$ are invertible and, hence, $|\psi_1\rangle$, $|\phi_1\rangle$ and $|\chi_1\rangle$ are nonzero and moreover linearly independent (since so are $\{|\Phi_1\rangle,|01\rangle,|\Phi_2\rangle\}$). This entails that $V_\epsilon=\textrm{span}\{|\psi_1\rangle, |\phi_1\rangle,|\chi_1\rangle\}$. However, this means that $V_\epsilon=(A_1\otimes B_1)
V_0$ and, as per Lemma \ref{subspace2}, we have reached a contradiction.
\end{proof}

\section{Impossibility results via NE transformations}

The techniques of the previous section rely heavily on the fact that both the input and target density matrices of the SEP protocol have non-trivial kernels and, thus, it is not clear how to use them in the case of full-rank density matrices. Moreover, as already discussed in the introduction, even if we have already proven that a maximally entangled state for a fixed spectrum never exists in the non-pure rank-deficient case in the standard resource theory of entanglement that uses LOCC convertibility, it is still interesting to explore whether this restriction can be lifted by considering a resource theory with a more permissive class of transformations. This leads us to consider NE maps, which is the largest class of transformations that give rise to a well-defined resource theory of entanglement for single-copy manipulation \cite{resource}. Reference \cite{nomems} established the non-existence of a maximally entangled state in $S(\lambda,1-\lambda,0,0)$ for $\lambda\in(2/3,1)$ by proving that there exists no NE map in this case transforming $\rho_{\vec{\lambda}}$ into an isospectral rank-2 Bell-diagonal state (cf.\ Eq.\ (\ref{Belldiagonal})). Here, we extend this approach to arbitrary rank and consider again the possibility of transforming $\rho_{\vec{\lambda}}$ to an isospectral Bell-diagonal state, which from now on we are going to denote by  
\begin{eqnarray}\label{eq:sigma}
\sigma_{\vec{\lambda}}=\sum_j \lambda_j \Phi_j.
\end{eqnarray}
Thus, we aim at finding obstructions to the existence of a NE map $\Lambda:\mathcal{D}\to\mathcal{D}$ such that $\Lambda(\rho_{\vec{\lambda}})=\sigma_{\vec{\lambda}}$. As already commented in the introduction, we are going to do so by reducing this question to the feasibility of a linear program (LP). 

Before addressing this, some comments are in order. First, as discussed after Eq.\ (\ref{Belldiagonal}), $\sigma_{\vec{\lambda}}$ is entangled iff $\lambda_1>1/2$. Notice that this only defines a strict subset of the region corresponding to inverting the inequality in Eq.\ (\ref{absolutesep}), where the question of the existence of a maximally entangled state in $\s$ is well posed. Therefore, since a transformation by LOCC from an entangled state to a separable state is always possible, our approach carries the inherent limitation that it will never work to discard the existence of a maximally entangled state in $\s$ whenever $\lambda_1\leq1/2$ (see actually Theorem \ref{thsufNE} below). Second, and in relation to this, it thus comes as a natural option to consider target states that are not Bell-diagonal. In fact, the LP approach that we are going to present can be immediately adapted to consider arbitrary isospectral target states. However, we have numerically investigated this option with random target states and we have found that this does not lead to any substantial improvement. Last, even if we restrict ourselves to Bell-diagonal states as targets, there is still some freedom in how to assign the eigenvalues to the Bell eigenstates (recall that the eigenvalues are ordered). Nevertheless, once we explain the LP approach, it will become apparent that this has no effect in its feasibility. Consequently, without loss of generality in what comes to the performance of our method, we can stick to the assignment made in Eq.\ (\ref{eq:sigma}).

\subsection{Eigenvalue distributions for which a NE transformation to Bell-diagonal states is possible}

As we have just discussed, if we fix the target states to be Bell-diagonal states, we will never be able to prove the nonexistence of a maximally entangled state in $\s$ for spectra satisfying $\lambda_1\leq1/2$. It turns out that the situation is more complex once we restrict ourselves to NE transformations. As we prove in the following, the MEMS state $\rho_{\vec{\lambda}}$ happens to be convertible by NE transformations to isospectral Bell-diagonal states for certain non-trivial eigenvalue distributions where $\sigma_{\vec{\lambda}}$ is actually entangled.

\begin{theorem}\label{thsufNE}
If it holds that 
\begin{equation}\label{thsufNE:assumption}
1-\lambda_1-2\lambda_2\geq0,
\end{equation} 
then there exists a NE map $\Lambda:\mathcal{D}\to\mathcal{D}$ such that $\Lambda(\rho_{\vec{\lambda}})=\sigma_{\vec{\lambda}}$.
\end{theorem}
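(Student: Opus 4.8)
The statement is a feasibility claim, so the plan is constructive: exhibit a specific CPTP map $\Lambda$, check that it is non-entangling, and verify $\Lambda(\rho_{\vec{\lambda}})=\sigma_{\vec{\lambda}}$. Two preliminary reductions make everything transparent. First, composing any candidate with the Pauli twirl $\frac14\sum_{i=0}^{3}(\sigma_i\otimes\sigma_i)(\cdot)(\sigma_i\otimes\sigma_i)$ — a random-local-unitary channel, hence LOCC and in particular NE — renders the output diagonal in the Bell basis while leaving every Bell population $\langle\Phi_j|\cdot|\Phi_j\rangle$ unchanged; so it suffices to build an NE map with $\langle\Phi_j|\Lambda(\rho_{\vec{\lambda}})|\Phi_j\rangle=\lambda_j$ for $j=1,\dots,4$. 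Second, $\rho_{\vec{\lambda}}$ is diagonal in the orthonormal basis $\{|\Phi_1\rangle,|\Phi_2\rangle,|01\rangle,|10\rangle\}$ with eigenvalues $(\lambda_1,\lambda_3,\lambda_2,\lambda_4)$, so the task is to keep the $\Phi_1$-population at $\lambda_1$, raise the $\Phi_2$-population from $\lambda_3$ to $\lambda_2$, and produce populations $\lambda_3,\lambda_4$ on $\Phi_3,\Phi_4$; note that $1-\lambda_1-2\lambda_2\ge0$ is equivalent to $\lambda_2\le\lambda_3+\lambda_4$, i.e.\ the weight $\lambda_2-\lambda_3$ that must be moved into the $\Phi_2$ direction is at most $\lambda_4$.

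The naive attempt $\Lambda=q\,\mathrm{id}+(1-q)\Lambda_{\mathrm{EB}}$ with $\Lambda_{\mathrm{EB}}$ entanglement-breaking (hence NE) fails exactly in the interesting regime $\lambda_1>1/2$: since any separable state has overlap at most $1/2$ with $|\Phi_1\rangle$, one would only get $\langle\Phi_1|\Lambda(\rho_{\vec{\lambda}})|\Phi_1\rangle\le q\lambda_1+(1-q)/2<\lambda_1$. The remedy is to replace the identity by a \emph{local filter}: a single product Kraus operator $A\otimes B$ applied with a sub-unit weight, completed to a channel by routing the rejected weight $\mathrm{tr}\big((\one-\mu^2(A\otimes B)^\dagger(A\otimes B))\,\cdot\big)$ into a fixed separable state (and possibly augmented by a further direct preparation of a separable state). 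Because $A\otimes B$ maps product vectors to product vectors, this ``filter-or-prepare'' channel is automatically non-entangling; together with the twirl and a $\sigma_z\otimes\sigma_z$ pinch — a local unitary under which $\rho_{\vec{\lambda}}$ is invariant, and which kills the coherences that could otherwise leak spurious entanglement — one obtains $\Lambda$ as a convex mixture of pieces that are each manifestly NE. Equivalently, once the Kraus structure is fixed the existence of such a $\Lambda$ is a linear program, and the proof amounts to exhibiting a closed-form feasible point.

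The steps are then: (1) write down the ansatz with the filter $A\otimes B$, the scalar $\mu$, the auxiliary separable state(s) and the mixing weights as unknowns; (2) impose the four linear equations $\langle\Phi_j|\Lambda(\rho_{\vec{\lambda}})|\Phi_j\rangle=\lambda_j$, solve for the parameters as explicit functions of $\vec{\lambda}$, and check that all admissibility constraints — nonnegative weights, $\mu^2(A\otimes B)^\dagger(A\otimes B)\le\one$, and positivity/separability of the auxiliary states (each of their Bell populations $\le1/2$) — can be met simultaneously; (3) conclude that $\Lambda$ is CPTP and non-entangling, which is immediate by construction (if one is forced to use a non-product ingredient, one instead verifies $\Lambda(|\phi\rangle\langle\phi|\otimes|\chi\rangle\langle\chi|)^{\Gamma}\ge0$ for all product states via the two-qubit PPT criterion, a bilinear and hence low-dimensional positivity check). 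The crux is step (2): showing that the parameter region is nonempty — in particular that the separable fill state one is forced to use is actually separable — is a genuine constraint, and $1-\lambda_1-2\lambda_2\ge0$ is precisely the feasibility condition that drops out; the rest is the linear algebra of matching the four Bell populations.
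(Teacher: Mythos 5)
Your proposal is a strategy outline rather than a proof: the step you yourself identify as the crux --- solving for the filter $A\otimes B$, the weight $\mu$, the auxiliary separable states and the mixing weights, and showing that the admissibility region is nonempty exactly when $1-\lambda_1-2\lambda_2\geq0$ --- is never carried out, only asserted to ``drop out''. This is not a harmless omission, because your ansatz cannot in fact cover the whole range of the theorem. Every ingredient you allow yourself (a product Kraus operator $A\otimes B$, measure-a-separable-POVM-element-and-prepare-a-separable-state branches, Pauli twirls and local pinches) generates a SEP map. But the hypothesis \eqref{thsufNE:assumption} is satisfied by rank-3 spectra with $\lambda_2=\lambda_3>0$ (e.g.\ $\vec{\lambda}=(1/2,1/4,1/4,0)$), and for those spectra the transformation $\rho_{\vec{\lambda}}\to\sigma_{\vec{\lambda}}$ is provably \emph{impossible} by SEP: the kernel of $\rho_{\vec{\lambda}}$ is spanned by the separable vector $|10\rangle$ while the kernel of the rank-3 Bell-diagonal target is spanned by an entangled vector, and the range/kernel argument of Theorem \ref{th2SEP} (Lemma \ref{subspace2}) then rules out any SEP protocol --- this is exactly the remark made after the theorem in the paper. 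So no choice of parameters in your ``filter-or-prepare'' family can work there; a genuinely global (non-product) ingredient is unavoidable, and your parenthetical fallback (``verify PPT of the outputs on product inputs'') names no concrete map to verify.

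The actual proof is much simpler and sidesteps the difficulty that motivated your filter. The obstruction you worry about --- that entanglement-breaking-type pieces cap the $\Phi_1$-population at $1/2$ --- only constrains maps whose outputs are always separable across $A\!:\!B$; an NE map is allowed to output entangled states, provided separable inputs do not trigger them too strongly. Since $\tr(\Phi_1\rho_{\vec{\lambda}})=\lambda_1$ exactly, the global measure-and-prepare channel
\begin{equation*}
\Lambda(X)=\tr(\Phi_1X)\,\Phi_1+\frac{\tr[(\one-\Phi_1)X]}{1-\lambda_1}\,(\lambda_2\Phi_2+\lambda_3\Phi_3+\lambda_4\Phi_4)
\end{equation*}
maps $\rho_{\vec{\lambda}}$ to $\sigma_{\vec{\lambda}}$ on the nose, with no filtering and no twirl. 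It is NE because (i) the hypothesis is equivalent to $\lambda_2/(1-\lambda_1)\leq1/2$, so the prepared tail is a separable Bell-diagonal state, and (ii) every separable $\rho$ satisfies $\tr(\Phi_1\rho)\leq1/2$, so all Bell coefficients of $\Lambda(\rho)$ are at most $1/2$ and $\Lambda(\rho)$ is separable. Your reduction observations (twirling, the bookkeeping $\lambda_2\le\lambda_3+\lambda_4$) are correct but do not substitute for exhibiting a map; as it stands the proposal has a genuine gap, and its core construction is structurally too weak for part of the parameter range it must cover.
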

\begin{proof}
We construct explicitly the NE map $\Lambda$ that implements the transformation under the stated hypothesis on the spectrum. Namely, since the case $\lambda_1=1$ is trivial, let us assume that $\lambda_1\neq1$ and
\begin{equation}
\Lambda(X)=\tr(\Phi_1X)\Phi_1+\frac{\tr[(\one-\Phi_1)X]}{1-\lambda_1}(\lambda_2\Phi_2+\lambda_3\Phi_3+\lambda_4\Phi_4).
\end{equation}
The map $\Lambda$ is manifestly CPTP because it has the form of a measure-and-prepare quantum channel (see e.g.\ \cite{watrous}) and it obviously has the desired property that $\Lambda(\rho_{\vec{\lambda}})=\sigma_{\vec{\lambda}}$. Hence, it only remains to prove that $\Lambda$ is NE. Notice first that the output of the map is a convex combination of two states and that the latter is separable. This is because the condition given by Eq.\ (\ref{thsufNE:assumption}) is equivalent to
\begin{equation}
\frac{\lambda_2}{1-\lambda_1}\leq\frac{1}{2}.
\end{equation}
Thus, 
\begin{equation}
\frac{\lambda_2\Phi_2+\lambda_3\Phi_3+\lambda_4\Phi_4}{1-\lambda_1}
\end{equation}
is a Bell-diagonal state with maximal eigenvalue not larger than $1/2$, and, hence, separable. As a consequence, for any $\rho\in\mathcal{D}$, $\Lambda(\rho)$, which is also a Bell-diagonal state, can only be entangled if $\tr(\Phi_1\rho)>1/2$. But, for every $\rho\in\mathcal{S}$ it must hold that $\tr(\Phi_1\rho)\leq1/2$. Therefore, it must hold that $\Lambda(\rho)\in\mathcal{S}$ for every $\rho\in\mathcal{S}$, as we wanted to prove.
\end{proof}

Notice that this does not of course imply that a maximally entangled state exists in $\s$ when the condition of Eq.\ (\ref{thsufNE:assumption}) holds. It could still be that it is impossible to transform $\rho_{\vec{\lambda}}$ to other isospectral states in this region. Notice as well that Eq.~\eqref{thsufNE:assumption} is only meaningful when the rank is 3 or 4, because otherwise it reduces to $\lambda_1-1\geq 0$, which can only occur in the trivial case of rank equal to 1. Moreover, when the rank is 3 it can only hold when $\lambda_2=\lambda_3$, and, in particular, it is worth pointing out that in this case the transformation is possible by NE but not by SEP. This is because when $\lambda_3>0$ $(\supp(\rho_{(\lambda_1,\lambda_2,\lambda_3,0)}))^\perp=\ker(\rho_{(\lambda_1,\lambda_2,\lambda_3,0)})$ is spanned by a separable state, while $\ker(\sigma_{(\lambda_1,\lambda_2,\lambda_3,0)})$ is spanned by an entangled state. Thus, we can use the exact same arguments as in the proof of Theorem \ref{th2SEP} to see that a rank-3 MEMS given by $\rho_{\vec{\lambda}}$ can never be transformed by SEP into a (isospectral or not) rank-3 Bell-diagonal state.

\subsection{Eigenvalue distributions for which a NE transformation to Bell-diagonal states is impossible}

Along the lines of~\cite{nomems}, by assuming the existence of a NE map $\Lambda(\rho_{\vec{\lambda}})=\sigma_{\vec{\lambda}}$ for fixed $\vec{\lambda}$ one can determine a set of fundamental constrains that must be satisfied; if this is not the case, $\vec{\lambda}$ can be ruled out as a valid spectral point where the map exists. As already mentioned, this constraints can be cast as a feasibility question for a LP. In general, the latter has the following form
\begin{eqnarray}\label{eq:lp}
\max\limits_{x} \hspace{2pt} & c^Tx &\nonumber\\
\text{such that} \hspace{2pt} Ax &\leq & b,\nonumber\\
x &\geq 0&,
\end{eqnarray}
where $A\in \mathbb{R}^{n\times m}$, $x,c\in \mathbb{R}^m$, and $b\in \mathbb{R}^n$ are given. The set $\{x\in\mathbb{R}^m:Ax\leq b,x\geq0\}$ is called the feasible region of the LP and, by construction, it is always a convex polytope. When the feasible set is empty, the LP is said to be infeasible. There are two reasons why a given LP might not admit a solution. One obvious possibility is because the LP is infeasible. The other is because the feasible region is unbounded (which will never happen in our case because in our problem it will also hold that $x\leq(1,1,\ldots,1)$). LPs constitute a basic field of study in optimization theory. There are many algorithms, such as interior point methods, that allow to solve them efficiently and they are included in standard libraries for most programming languages. Furthermore, duality theory provides means to certify that a numerically found solution is indeed correct or that the LP is infeasible. For more details regarding LPs the reader is referred to \cite{boyd} and references therein.

In order to present this LP approach, we begin by assuming that there exists a NE map $\Lambda$ such that $\Lambda(\rho_{\vec{\lambda}})=\sigma_{\vec{\lambda}}$. Recalling Eq.\ (\ref{mems}), by linearity this implies that
\begin{equation}\label{eq:ne1}
\sum_j\lambda_j\Lambda(\xi_j)=\sum_j\lambda_j\Phi_j.
\end{equation}
At this point, we define
\begin{eqnarray}
F_{ij}=\tr(\Phi_i\Lambda(\xi_j))
\end{eqnarray}
and from Eq.~\eqref{eq:ne1}, it follows that
\begin{equation}\label{1}
\sum_j\lambda_jF_{ij}=\lambda_i
\end{equation}
holds for all $i\in\{1,2,3,4\}$. Since $\Lambda$ is trace-preserving, it must also hold that
\begin{equation}\label{2}
\sum_{i}F_{ij}=1\quad\forall j.
\end{equation}
Given that $\Lambda$ is a positive map, we also have the obvious inequality constraints
\begin{equation}\label{3}
F_{ij}\geq0\quad\forall i,j.
\end{equation}
Additionally, because $(\xi_1+\xi_3)/2$ is separable we must have as well that
\begin{equation}\label{4}
F_{i1}+F_{i3}\leq1\quad\forall i,
\end{equation}
and because $\xi_2$ and $\xi_4$ are separable that
\begin{equation}\label{5}
F_{i2}\leq\frac{1}{2},\quad F_{i4}\leq\frac{1}{2}\quad\forall i.
\end{equation}
Notice that for $i=1$ and $i=3$
\begin{equation}
\frac{1}{2}\xi_i+\frac{1}{4}(\xi_2+\xi_4)
\end{equation}
is a separable Bell-diagonal state. Therefore, we also obtain the constraints
\begin{equation}\label{6}  
F_{i1}+\frac{1}{2}(F_{i2}+F_{i4})\leq1\quad\forall i
\end{equation}
and
\begin{equation}\label{7}
F_{i3}+\frac{1}{2}(F_{i2}+F_{i4})\leq1\quad\forall i.
\end{equation}

Actually, the conditions given by Eqs.\ (\ref{5}), (\ref{6}) and (\ref{7}) can be generalized by considering for $a\in[0,1]$ the family of states given by
\begin{eqnarray}\label{eq:tau_states}
\tau_a^\pm &=& 2a(1-a)\Phi_1+a^2|01\rangle\langle01|+(1-a)^2|10\rangle\langle10|
\nonumber\\
&\pm & a(1-a)(|01\rangle\langle10|+|10\rangle\langle01|)\nonumber\\
\tilde{\tau}_a^\pm &=& 2a(1-a)\Phi_2+a^2|01\rangle\langle01|+(1-a)^2|10\rangle\langle10|\nonumber\\
&\pm & a(1-a)(|01\rangle\langle10|+|10\rangle\langle01|).
\end{eqnarray}
It can be easily checked that $(\tau_a^\pm)^\Gamma\geq0$ and $(\tilde{\tau}_a^\pm)^\Gamma\geq0$ hold $\forall a\in[0,1]$. Consequently, these states are always separable and it must therefore hold for arbitrary values of $a\in[0,1]$ and for all $i$ that
\begin{eqnarray}\label{8}
2a(1-a)F_{i1}+a^2 F_{i2} +(1-a)^2 F_{i4} &\leq & \frac{1}{2},\nonumber\\
2a(1-a)F_{i3}+a^2 F_{i2} +(1-a)^2 F_{i4}&\leq & \frac{1}{2}.
\end{eqnarray}
Notice that the aforementioned inequalities correspond to the cases $a\in\{0,1/2,1\}$.

In summary, for a given spectrum distribution $\vec{\lambda}$, if there exists a NE map $\Lambda$ such that $\Lambda(\rho_{\vec{\lambda}})=\sigma_{\vec{\lambda}}$, then it must be possible to assign values to the 16 variables $\{F_{ij}\}$ such that the conditions given by Eqs.\ \eqref{1}, \eqref{2}, \eqref{3}, \eqref{4}, and \eqref{8} are all fulfilled at the same time. By considering a discretization of the interval $[0,1]$ in which the parameter $a$ in Eq.\ (\ref{8}) takes values, we obtain a finite list of linear constraints. Thus, if we arrange the unknowns $\{F_{ij}\}$ into a vector $x\in\mathbb{R}^{16}$, for which Eq.~\eqref{3} becomes $x\geq0$, the problem can be expressed in the form $Ax\leq b$ for some suitably chosen matrix $A$ and vector $b$ that only depend on $\vec{\lambda}$. Hence, our problem boils down to the feasibility of a given LP. If for a given spectrum $\vec{\lambda}$ this LP is infeasible, then we can conclude that there exists no NE map implementing the transformation $\Lambda(\rho_{\vec{\lambda}})=\sigma_{\vec{\lambda}}$, and, as a consequence, that the set $\s$ does not admit a maximally entangled state.

\begin{figure}[!t]
\centering
\includegraphics[scale=0.63]{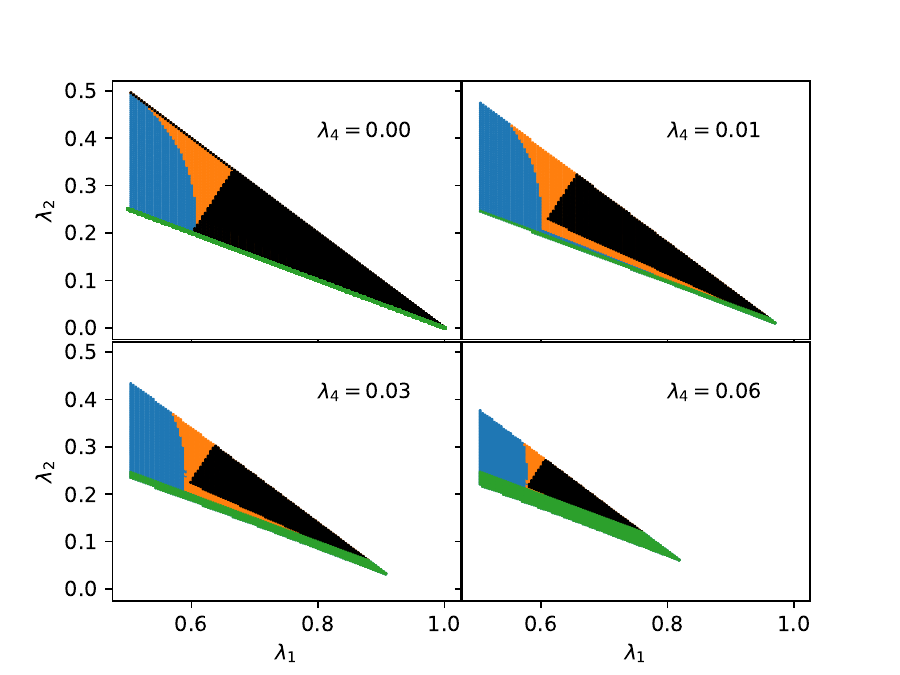}   
    \caption{Feasibility of the LP for values of $\vec{\lambda}$ constrained to $\lambda_1>1/2$ for the choices $\lambda_4=0,0.01,0.03,0.06$ (notice that $\lambda_1>\lambda_3+2\sqrt{\lambda_2\lambda_4}$ is always satisfied in these cases). Note that $\lambda_4=0$ corresponds to the rank-3 case. Feasible regions of the LP are in blue and green and infeasible regions are in orange and black. The green region corresponds to the points where we analytically know that the LP is feasible due to Theorem~\ref{thsufNE}, while the black region corresponds to the points of infeasibility due to Theorems~\ref{th:thne1}, \ref{th:thne2} and \ref{th:thne4}.}
    \label{fig:fig1}
\end{figure}

The question of the feasibility of this LP can be easily solved with the aid of a computer. Figure \ref{fig:fig1} depicts the results obtained for four different choices of $\lambda_4$, which were obtained in a few seconds by an implementation in a standard desktop computer. The green region corresponds to the points where we not only know that the LP is feasible but that the transformation is indeed possible by NE as per Theorem \ref{thsufNE}. Other feasible regions of the LP have been colored in blue, whereas infeasible regions have been colored in orange. Thus, for all eigenvalue distributions corresponding to this latter case a maximally entangled state for the given spectrum does not exist. As can be seen from Fig. \ref{fig:fig1}, our approach allows to prove this property for a general set of possible spectra within the constraints given by $\lambda_1>1/2$ and that of Theorem~\ref{thsufNE} even in the case of full rank and under the most permissive class of transformations. The black region within the orange region, corresponds to the cases where we can rule out the feasibility of the LP without the need of numerical means, which is the content of the remaining of this section. While this region is strictly contained in the orange region and we have therefore not been able to prove analytically the infeasibility of the LP in the exact same regions that numerics dictate, the following results allow us to establish the nonexistence of a maximally entangled state for a fixed spectrum under NE transformations for a large class of eigenvalue distributions and not just a specific choice, which is as far as numerical implementations can go.  

Before concluding this section with these last results, we are now in the position to explain why the feasibility of the LP does not depend on how we choose to match the eigenvalues with the Bell eigenstates in Eq.\ (\ref{eq:sigma}). Let $\pi$ be an arbitrary permutation in $\{1,2,3,4\}$ and suppose that we choose instead as a target state
\begin{equation}   
\sigma^\pi_{\vec{\lambda}}=\sum_j \lambda_{\pi(j)} \Phi_j.
\end{equation}
Under the assumption that there is a NE map $\Lambda$ such that $\Lambda(\rho_{\vec{\lambda}})=\sigma^\pi_{\vec{\lambda}}$, all the constraints from before stay the same except for that of Eq.\ (\ref{1}), which now reads
\begin{equation}
\sum_j\lambda_jF_{ij}=\lambda_{\pi(i)}.
\end{equation} 
Hence, the assignment $\{F_{ij}\}$ is feasible for the LP corresponding to $\Lambda(\rho_{\vec{\lambda}})=\sigma_{\vec{\lambda}}$ iff the assignment $\{F_{\pi(i)j}\}$ is feasible for the LP corresponding to $\Lambda(\rho_{\vec{\lambda}})=\sigma^\pi_{\vec{\lambda}}$. Therefore, one LP is feasible iff the other is.

\subsubsection{Rank-2 case}

\begin{theorem}\label{th:thne1}
Let $\rho_{\vec{\lambda}}$ and $\sigma_{\vec{\lambda}}$ be given respectively by Eqs.~\eqref{mems} and~\eqref{eq:sigma} with $\vec{\lambda}=(\lambda,1-\lambda,0,0)$. Then, there exists no NE map $\Lambda:\mathcal{D}\to\mathcal{D}$ such that $\Lambda(\rho_{\vec{\lambda}})=\sigma_{\vec{\lambda}}$ for $\lambda\in(1/2,1)$.
\end{theorem}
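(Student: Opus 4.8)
The plan is to specialize the linear-program infeasibility machinery of this section to the rank-2 spectrum $\vec\lambda=(\lambda,1-\lambda,0,0)$ and to exhibit by hand a contradiction among the necessary constraints on the variables $F_{ij}=\tr(\Phi_i\Lambda(\xi_j))$, thereby extending the interval $\lambda\in(2/3,1)$ of \cite{nomems} to all of $\lambda\in(1/2,1)$. Assume for a contradiction that there is a NE map $\Lambda$ with $\Lambda(\rho_{\vec\lambda})=\sigma_{\vec\lambda}$. Since $\rho_{\vec\lambda}=\lambda\xi_1+(1-\lambda)\xi_2$ involves only $\xi_1$ and $\xi_2$, only the first two columns of $F$ are constrained by Eq.~\eqref{1}, which here reads $\lambda F_{i1}+(1-\lambda)F_{i2}=\lambda_i$ with $\lambda_3=\lambda_4=0$.

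First I would read off the entries of $F$ that are forced. For $i\in\{3,4\}$ the right-hand side of Eq.~\eqref{1} vanishes, so nonnegativity (Eq.~\eqref{3}) gives $F_{3j}=F_{4j}=0$ for $j\in\{1,2\}$; trace preservation (Eq.~\eqref{2}) then yields $F_{12}+F_{22}=1$. Since $\xi_2=|01\rangle\langle01|$ is separable, Eq.~\eqref{5} gives $F_{12},F_{22}\leq 1/2$, hence $F_{12}=F_{22}=1/2$. Substituting into Eq.~\eqref{1} with $i=1$ gives $F_{11}=(3\lambda-1)/(2\lambda)=3/2-1/(2\lambda)$, which is strictly larger than $1/2$ exactly when $\lambda>1/2$.

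The contradiction then comes from the one-parameter family of separability constraints in Eq.~\eqref{8}. Taking $i=1$ and discarding the nonnegative term $(1-a)^2F_{14}$, together with $F_{12}=1/2$, we get $2a(1-a)F_{11}+a^2/2\leq 1/2$ for all $a\in[0,1]$, i.e.\ $F_{11}\leq(1+a)/(4a)$ for all $a\in(0,1)$. The right-hand side is strictly decreasing in $a$ and tends to $1/2$ as $a\to 1^-$; thus for any fixed $\lambda\in(1/2,1)$ one can choose $a<1$ close enough to $1$ that $(1+a)/(4a)<3/2-1/(2\lambda)=F_{11}$, a contradiction. Hence no such NE map exists.

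The only point needing a little care is that the bound $F_{11}\leq 1/2$ is attained only in the limit $a\to 1$: the fixed-$a$ specializations of Eq.~\eqref{8} that already appear as Eqs.~\eqref{5}--\eqref{7} (the cases $a\in\{0,1/2,1\}$) give only $F_{11}\leq 3/4$, which contradicts the value above only for $\lambda>2/3$, recovering the range of \cite{nomems}; it is essential to use the whole continuum of $\tau_a^\pm$ states with $a\to 1$ in order to close the gap down to $\lambda=1/2$. Everything else is routine bookkeeping of which $F_{ij}$ are pinned down, so I expect no substantial obstacle.
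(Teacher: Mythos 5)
Your proposal is correct and follows essentially the same route as the paper: it pins down $F_{12}=F_{22}=1/2$ and $F_{11}=(3\lambda-1)/(2\lambda)$ from Eqs.~\eqref{1}--\eqref{3} and \eqref{5}, and then derives the contradiction from the continuous family of constraints in Eq.~\eqref{8}. The only (cosmetic) difference is the final step, where you let $a\to1^-$ in the bound $F_{11}\leq(1+a)/(4a)$ instead of the paper's explicit choice $a=1/(2\lambda)$ followed by the factorization $4\lambda^3-12\lambda^2+9\lambda-2=4(\lambda-1/2)^2(\lambda-2)$.
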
   
\begin{proof}
Suppose such a map exists. Then, arguing as in \cite{nomems} under the assumption that $\lambda<1$ we obtain that it must hold that
\begin{equation}\label{condsprl}
F_{11}=\frac{3\lambda-1}{2\lambda},\quad F_{21}=\frac{1-\lambda}{2\lambda},\quad
F_{12}=F_{22}=\frac{1}{2}.
\end{equation}
Then, the condition of Eq.\ (\ref{8}) boils down to
\begin{equation}
2a(1-a)\frac{3\lambda-1}{2\lambda}+\frac{a^2}{2}\leq\frac{1}{2}.
\end{equation}
Thus, choosing $a=1/(2\lambda)\in(1/2,1)$, it follows that it must hold that
\begin{equation}
\frac{12\lambda^2-9\lambda+2}{8\lambda^3}\leq\frac{1}{2}
\end{equation}
which is a contradiction when $\lambda>1/2$. To see this, notice that the last inequality is equivalent to
\begin{equation}
4\lambda^3-12\lambda^2+9\lambda-2\geq0
\end{equation}
and that $4\lambda^3-12\lambda^2+9\lambda-2=4(\lambda-1/2)^2(\lambda-2)$.
\end{proof}

\subsubsection{Rank-3 case}

\begin{theorem}\label{th:thne2}
Let $\rho_{\vec{\lambda}}$ and $\sigma_{\vec{\lambda}}$ be given respectively by Eqs.\ \eqref{mems} and~\eqref{eq:sigma} with $\lambda_3>\lambda_4=0$. Then, there exists no NE map $\Lambda:\mathcal{D}\to\mathcal{D}$ such that $\Lambda(\rho_{\vec{\lambda}})=\sigma_{\vec{\lambda}}$ whenever $2\lambda_1-\lambda_2>1$ and $\lambda_2>\lambda_3$.
\end{theorem}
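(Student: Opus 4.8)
The plan is to argue by contradiction: assume there is an NE map $\Lambda$ with $\Lambda(\rho_{\vec{\lambda}})=\sigma_{\vec{\lambda}}$, set $F_{ij}=\tr(\Phi_i\Lambda(\xi_j))$ exactly as above, and extract a contradiction from only a handful of the constraints~\eqref{1}--\eqref{8}, essentially by inspecting the first and then the second row of $(F_{ij})$. I would begin by recording the rank-$3$ specializations: since $\lambda_4=0$ while $\lambda_1,\lambda_2,\lambda_3>0$, Eq.~\eqref{1} with $i=4$ forces $F_{41}=F_{42}=F_{43}=0$, so Eq.~\eqref{2} becomes $F_{1j}+F_{2j}+F_{3j}=1$ for $j=1,2,3$; and it is convenient to rewrite the hypothesis $2\lambda_1-\lambda_2>1$ (using $\sum_i\lambda_i=1$) as $\lambda_1-\lambda_3-2\lambda_2>0$.

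The core step is to show that these constraints already force $F_{11}=1$. Set $u=F_{11}$. Eq.~\eqref{1} with $i=1$ reads $\lambda_2F_{12}+\lambda_3F_{13}=\lambda_1(1-u)$; combining it with $F_{13}\le 1-u$ (the $i=1$ case of Eq.~\eqref{4}) gives $(\lambda_1-\lambda_3)(1-u)\le\lambda_2F_{12}$. The $i=1$ case of Eq.~\eqref{6}, after dropping the nonnegative term $F_{14}/2$, gives $F_{12}\le 2(1-u)$. Hence $(\lambda_1-\lambda_3)(1-u)\le 2\lambda_2(1-u)$, i.e.\ $(1-u)(\lambda_1-\lambda_3-2\lambda_2)\le 0$; since $\lambda_1-\lambda_3-2\lambda_2>0$ this forces $u=F_{11}=1$.

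It remains to dispose of the degenerate case $F_{11}=1$, which I expect to be the only delicate point. Here $\tr(\Phi_1\Lambda(\Phi_1))=1$ forces $\Lambda(\Phi_1)=\Phi_1$, so by linearity $\lambda_2\Lambda(\xi_2)+\lambda_3\Lambda(\xi_3)=\lambda_2\Phi_2+\lambda_3\Phi_3$, and positivity of $\Lambda(\xi_2),\Lambda(\xi_3)$ confines both their ranges to the $2$-dimensional subspace $W=\mathrm{span}\{|\Phi_2\rangle,|\Phi_3\rangle\}$. As $\xi_2=|01\rangle\langle01|$ is separable and $\Lambda$ is NE, $\Lambda(\xi_2)$ is a separable state supported on $W$; invoking the classification of $2$-dimensional subspaces recalled before Lemma~\ref{subspace1}, $W$ contains only two product vectors up to scalars, namely $|v_+\rangle^{\otimes2}$ and $|v_-\rangle^{\otimes2}$ with $|v_\pm\rangle=(|0\rangle\pm i|1\rangle)/\sqrt2$, and a short computation shows both have Bell profile $(0,\tfrac12,\tfrac12,0)$. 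Therefore $\Lambda(\xi_2)$, being a mixture of the two, satisfies $F_{12}=0$ and $F_{22}=F_{32}=\tfrac12$. Since $F_{11}=1$ forces $F_{21}=F_{31}=0$ via Eq.~\eqref{2}, Eq.~\eqref{1} with $i=2$ gives $\lambda_2\cdot\tfrac12+\lambda_3F_{23}=\lambda_2$, i.e.\ $F_{23}=\lambda_2/(2\lambda_3)$, which is strictly larger than $1/2$ precisely because $\lambda_2>\lambda_3$. Substituting $F_{23}$ and $F_{22}=\tfrac12$ into the second inequality of Eq.~\eqref{8} with $i=2$ (again dropping the nonnegative $(1-a)^2F_{24}$ term) gives $2a(1-a)\lambda_2/(2\lambda_3)+a^2/2\le 1/2$ for all $a\in[0,1]$; multiplying by $2\lambda_3$ and dividing by $1-a$ reduces this to $a(2\lambda_2-\lambda_3)\le\lambda_3$, and letting $a\to 1^-$ forces $\lambda_2\le\lambda_3$, a contradiction. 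This settles all cases.

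The main obstacle is exactly this last branch: the linear constraints only identify $F_{11}=1$ as a face of the feasible polytope and cannot be closed by further inequalities among the $F_{ij}$ alone, so one has to step outside the LP and use the geometry of the separable states lying inside the $2$-dimensional range of $\Lambda(\xi_2)$ (in the same spirit as Lemma~\ref{subspace1} and the proofs of Theorems~\ref{th1SEP}--\ref{th2SEP}). Everything else is elementary; the two hypotheses enter only through $\lambda_1-\lambda_3-2\lambda_2>0$ (used to force $F_{11}=1$) and through $F_{23}=\lambda_2/(2\lambda_3)>1/2$ (used in the degenerate branch).
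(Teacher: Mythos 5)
Your proof is correct, and its skeleton is the same as the paper's: assume the NE map exists, form $F_{ij}=\tr(\Phi_i\Lambda(\xi_j))$, and combine Eqs.~\eqref{1}, \eqref{2}, \eqref{4}, \eqref{5}, \eqref{6} and the $a\to1$ limit of Eq.~\eqref{8} to reach $\lambda_2\le\lambda_3$. The differences are in the middle. The paper first forces $F_{12}=0$ by rewriting Eq.~\eqref{1} for $i=1$ as a positive combination of the expressions bounded by Eqs.~\eqref{4} and \eqref{6} minus $(2\lambda_1-\lambda_2-1)F_{12}/2$, and only then concludes $F_{11}=1$, $F_{13}=0$, $F_{22}=F_{32}=1/2$; you instead force $F_{11}=1$ directly from the same three constraints, which is an equally clean (arguably more transparent) use of the hypothesis $\lambda_1-\lambda_3-2\lambda_2>0$. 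Where you genuinely diverge is the ensuing branch: you leave the LP and argue geometrically that $\Lambda(\Phi_1)=\Phi_1$, that $\Lambda(\xi_2)$ is a separable state supported on $W=\mathrm{span}\{|\Phi_2\rangle,|\Phi_3\rangle\}$, and that the only product vectors in $W$ are $|v_\pm\rangle^{\otimes2}$ with Bell profile $(0,\tfrac12,\tfrac12,0)$, whence $F_{12}=0$ and $F_{22}=F_{32}=\tfrac12$. That argument is valid (it is the same range-based reasoning as in the SEP section), but your closing remark that the case $F_{11}=1$ \emph{cannot} be closed by the linear constraints alone is wrong: once $F_{11}=1$, Eq.~\eqref{1} for $i=1$ gives $\lambda_2F_{12}+\lambda_3F_{13}=0$, hence $F_{12}=F_{13}=0$ because $\lambda_2,\lambda_3>0$ and $F_{ij}\ge0$, and then Eqs.~\eqref{2} and \eqref{5} force $F_{22}=F_{32}=\tfrac12$ exactly as in the paper. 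So your geometric detour buys a structural insight (and a technique that could be useful when the purely linear constraints really are insufficient), but it is not needed here, and the paper's route stays entirely inside the LP, which is what makes the numerical feasibility analysis of Fig.~\ref{fig:fig1} consistent with the analytic result.
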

\begin{proof}
To prove the claim we assume that $2\lambda_1-\lambda_2>1$ and we will show that if there exists a NE map $\Lambda$ such that $\Lambda(\rho_{\vec{\lambda}})=\sigma_{\vec{\lambda}}$, then it must necessarily hold that $\lambda_2=\lambda_3$. Notice first that the premise that $\lambda_4=0$ implies that $F_{4j}=0$ $\forall j$. Equation (\ref{1}) when $i=1$ reads
\begin{equation}\label{principal1}
\lambda_1=\lambda_1F_{11}+\lambda_2F_{12}+(1-\lambda_1-\lambda_2)F_{13},
\end{equation}
which can be rewritten as
\begin{eqnarray}
\lambda_1&=&(1-\lambda_1-\lambda_2)(F_{11}+F_{13})\nonumber\\
&+&(2\lambda_1+\lambda_2-1)(F_{11}+F_{12}/2)\nonumber\\
&-&(2\lambda_1-\lambda_2-1)F_{12}/2.
\end{eqnarray}
Since $1-\lambda_1-\lambda_2\geq0$ and $2\lambda_1+\lambda_2-1=\lambda_1-\lambda_3\geq0$, Eqs. (\ref{4}) and (\ref{6}) together with the above one entail that
\begin{equation}
\lambda_1\leq\lambda_1-(2\lambda_1-\lambda_2-1)F_{12}/2,
\end{equation}
and, thus, our assumption that $2\lambda_1-\lambda_2>1$ enforces that $F_{12}=0$. This implies, by Eqs.\ (\ref{2}) and (\ref{5}) and the fact that $F_{42}=0$, that $F_{22}=F_{32}=1/2$. In addition to this, Eq.\ (\ref{principal1}) boils down to
\begin{equation}
\lambda_1=\lambda_1F_{11}+(1-\lambda_1-\lambda_2)F_{13},
\end{equation}
 and because of Eq.\ (\ref{4}) we must then conclude that $F_{11}=1$ and $F_{13}=0$. This leads then to $F_{21}=F_{31}=0$ as per Eq.\ (\ref{2}). With all these constraints Eq.\ (\ref{1}) for $i=2$ now reads
 \begin{equation}
 F_{23}=\frac{\lambda_2}{2(1-\lambda_1-\lambda_2)}=\frac{\lambda_2}{2\lambda_3}.
 \end{equation}
However, the second inequality in Eq.\ (\ref{8}) imposes then for $a<1$ that 
\begin{equation}
\frac{\lambda_2}{2\lambda_3}=F_{23}\leq\frac{1+a}{4a}\to_{a\to1}\frac{1}{2}.
\end{equation}
Thus, if $\lambda_2>\lambda_3$, we immediately obtain a contradiction in the above condition by taking $a$ sufficiently close to 1. Hence, the last inequality can only hold if $\lambda_2=\lambda_3$, as we wanted to show.
\end{proof}

\subsubsection{Rank-4 case}

\begin{theorem}\label{th:thne4}
Let $\rho_{\vec{\lambda}}$ and $\sigma_{\vec{\lambda}}$ be given respectively by Eqs.\ \eqref{mems} and~\eqref{eq:sigma}. Then, there exists no NE map $\Lambda:\mathcal{D}\to\mathcal{D}$ such that $\Lambda(\rho_{\vec{\lambda}})=\sigma_{\vec{\lambda}}$ in any of the spectral regions given by
\begin{align}
A&=\{2\lambda_2+\lambda_3-\lambda_1<0, 2\lambda_3+\lambda_4-\lambda_2<0\},\nonumber\\
B&=\{2\lambda_2+\lambda_3-\lambda_1<0, 2\lambda_3+\lambda_4-\lambda_2\geq 0,\lambda_3\leq 2\lambda_4,\nonumber\\
&\lambda_2>\lambda_3+\lambda_4\},\nonumber\\
C&=\{2\lambda_2+\lambda_3-\lambda_1<0 ,2\lambda_3+\lambda_4-\lambda_2\geq 0,\lambda_3 > 2\lambda_4,\nonumber\\
&\lambda_2>\frac{3}{2}\lambda_3\}.
\end{align}
\end{theorem}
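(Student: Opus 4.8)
The plan is to extend to rank four the strategy underlying the proofs of Theorems~\ref{th:thne1} and~\ref{th:thne2}. I would assume for a contradiction that there is a NE map $\Lambda$ with $\Lambda(\rho_{\vec\lambda})=\sigma_{\vec\lambda}$, record the induced constraints \eqref{1}, \eqref{2}, \eqref{3}, \eqref{4}, \eqref{5}, \eqref{6} and~\eqref{7} on the quantities $F_{ij}=\tr(\Phi_i\Lambda(\xi_j))$, and exploit the single hypothesis that \emph{all three} regions share, namely $2\lambda_2+\lambda_3-\lambda_1<0$. The first step is to rewrite Eq.~\eqref{1} for $i=1$ identically as
\begin{align*}
\lambda_1 &= \lambda_3(F_{11}+F_{13})+(\lambda_1-\lambda_3)\left(F_{11}+\tfrac12(F_{12}+F_{14})\right)\\
&\quad+\left(\lambda_2-\tfrac{\lambda_1-\lambda_3}{2}\right)F_{12}+\left(\lambda_4-\tfrac{\lambda_1-\lambda_3}{2}\right)F_{14}.
\end{align*}
Here $\lambda_3\geq0$ and $\lambda_1-\lambda_3\geq0$, whereas the remaining two coefficients are \emph{strictly negative}: $\lambda_2-\tfrac{\lambda_1-\lambda_3}{2}=\tfrac12(2\lambda_2+\lambda_3-\lambda_1)<0$ by the shared hypothesis, and $\lambda_4-\tfrac{\lambda_1-\lambda_3}{2}\leq\lambda_2-\tfrac{\lambda_1-\lambda_3}{2}<0$ since $\lambda_4\leq\lambda_2$. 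Inserting Eqs.~\eqref{4} and~\eqref{6} for $i=1$ (both bounded by $1$) together with $F_{12},F_{14}\geq0$ then forces $F_{12}=F_{14}=0$. Hence Eq.~\eqref{1} for $i=1$ collapses to $\lambda_1=\lambda_1F_{11}+\lambda_3F_{13}$, and since $\lambda_1>\lambda_3$ (which follows from $\lambda_1>2\lambda_2+\lambda_3$ and $\lambda_2>0$, the rank being four) and $F_{11}+F_{13}\leq1$, this forces $F_{11}=1$ and $F_{13}=0$; Eq.~\eqref{2} for $j=1$ then yields $F_{21}=F_{31}=F_{41}=0$.

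With these values at hand, Eq.~\eqref{1} for $i=2$ reads $\lambda_2=\lambda_2F_{22}+\lambda_3F_{23}+\lambda_4F_{24}$. I would then feed in the bound $F_{23}\leq1-\tfrac12(F_{22}+F_{24})$ from Eq.~\eqref{7} (with $i=2$), arriving at the key inequality
\[
\left(\lambda_2-\tfrac{\lambda_3}{2}\right)F_{22}+\left(\lambda_4-\tfrac{\lambda_3}{2}\right)F_{24}\;\geq\;\lambda_2-\lambda_3,
\]
where $\lambda_2-\tfrac{\lambda_3}{2}>0$, while the sign of $\lambda_4-\tfrac{\lambda_3}{2}$ is precisely the criterion $\lambda_3\lessgtr2\lambda_4$ that separates regions $B$ and $C$. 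Using $0\leq F_{22},F_{24}\leq\tfrac12$ (Eqs.~\eqref{3} and~\eqref{5}): if $\lambda_3\leq2\lambda_4$, the left-hand side is at most $\tfrac12(\lambda_2+\lambda_4-\lambda_3)$, so $\lambda_2\leq\lambda_3+\lambda_4$; if $\lambda_3>2\lambda_4$, one discards the (nonpositive) $F_{24}$ term and the left-hand side is at most $\tfrac12\bigl(\lambda_2-\tfrac{\lambda_3}{2}\bigr)$, so $\lambda_2\leq\tfrac32\lambda_3$. These contradict, respectively, the defining inequality $\lambda_2>\lambda_3+\lambda_4$ of region $B$ and $\lambda_2>\tfrac32\lambda_3$ of region $C$; in region $A$, where $\lambda_2>2\lambda_3+\lambda_4$, both conclusions fail at once, since $\lambda_3>0$ makes $2\lambda_3+\lambda_4$ strictly larger than both $\lambda_3+\lambda_4$ and $\tfrac32\lambda_3$. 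Thus in every region the assumed NE map cannot exist.

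I expect the main obstacle to be pinpointing the correct rewriting of Eq.~\eqref{1} for $i=1$ in the first step: one that simultaneously (i) puts nonnegative weights on exactly the ``separable'' combinations controlled by Eqs.~\eqref{4} and~\eqref{6}, and (ii) leaves over only the variables $F_{12}$ and $F_{14}$ with coefficients whose negativity is equivalent to the common hypothesis $2\lambda_2+\lambda_3-\lambda_1<0$. Everything afterwards is bookkeeping of which $F_{ij}$ are thereby pinned to $0$ or $1$, plus an elementary two-case estimate. It is perhaps worth noting that, as in the rank-three argument, only the $a\in\{0,\tfrac12,1\}$ instances of Eq.~\eqref{8} enter (through Eqs.~\eqref{5} and~\eqref{7}), so the continuous family~\eqref{eq:tau_states} plays no role in this proof.
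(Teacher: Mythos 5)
Your proposal is correct and follows essentially the same route as the paper's proof: the same rewriting of Eq.~\eqref{1} for $i=1$ combined with Eqs.~\eqref{4} and~\eqref{6} to force $F_{12}=F_{14}=0$, $F_{11}=1$, $F_{13}=0$, and then Eq.~\eqref{1} for $i=2$ together with Eqs.~\eqref{7}, \eqref{5} and~\eqref{3} with the same case split $\lambda_3\lessgtr 2\lambda_4$. The only deviations are cosmetic: you obtain negativity of the $F_{14}$ coefficient from $\lambda_4\leq\lambda_2$ rather than the paper's $\lambda_1>1/2$ observation, and you dispose of region $A$ by noting that both case conclusions contradict $\lambda_2>2\lambda_3+\lambda_4$, whereas the paper derives $2\lambda_3+\lambda_4\geq\lambda_2$ directly from $F_{22},F_{24}\leq 1/2$ and $F_{23}\leq1$.
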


\begin{proof}
Equation \eqref{1} for $i=1$ reads
\begin{equation}\label{th11:eq1}
\lambda_1 = \lambda_1 F_{11}+\lambda_2F_{12}+\lambda_3 F_{13}+\lambda_4 F_{14}.
\end{equation}
Using Eq.~\eqref{4} for $i=1$ this then implies that
\begin{equation}
\lambda_1 \leq  \lambda_3 +(\lambda_1-\lambda_3)F_{11}+\lambda_2 F_{12}+ \lambda_4 F_{14},
\end{equation}
which can then be rewritten as
\begin{align}
\lambda_1-\lambda_3 &\leq  (\lambda_1-\lambda_3)\left(F_{11} +\frac{F_{12}+F_{14}}{2}\right)\nonumber\\
&+\left(\lambda_2 -\frac{\lambda_1-\lambda_3}{2}\right)F_{12}+ \left(\lambda_4 -\frac{\lambda_1-\lambda_3}{2}\right) F_{14}.
\end{align}
Equation \eqref{6} for $i=1$ then implies that
\begin{equation}\label{eq:eq1_th11}
0 \leq  \left(\frac{2\lambda_2+\lambda_3-\lambda_1}{2}\right)F_{12}+ \left(\frac{2\lambda_4+\lambda_3-\lambda_1}{2}\right)F_{14}.
\end{equation} 
Observe now that
\begin{eqnarray}
2\lambda_4+\lambda_3-\lambda_1 &=&\lambda_4+(1-\lambda_2-2\lambda_1)\nonumber\\
&=&\lambda_4-\lambda_2+(1-2\lambda_1),
\end{eqnarray}
is always negative for $\lambda_1> \frac{1}{2}$. Thus, if in addition it holds that $2\lambda_2+\lambda_3-\lambda_1<0$, we must necessarily have
that $F_{12}=F_{14}=0$, and, then, by Eqs.\ (\ref{th11:eq1})  and (\ref{4}), that $F_{13}=0$ and $F_{11}=1$. This, in turn, leads to $F_{21}=F_{31}=F_{41}=0$. With these constrains, Eq.~\eqref{1} for $i=2$ reads
\begin{eqnarray}\label{th11:eq2}
\lambda_2=\lambda_2 F_{22}+\lambda_3 F_{23}+ \lambda_4 F_{24},
\end{eqnarray}
which gives
\begin{eqnarray}
F_{23}\geq \frac{\lambda_2-\lambda_4}{2\lambda_3}.
\end{eqnarray}
For this to have a valid solution (i.e., fulfilling $F_{23}\leq1$), we must then have
\begin{eqnarray}
2\lambda_3+\lambda_4-\lambda_2\geq 0.
\end{eqnarray}
This proves that there cannot exist a NE map implementing the desired transformation for spectra in the region $A$.

We assume now that the first two inequalities defining the region $B$ hold and use Eq.~\eqref{7} for $i=2$ to obtain from Eq.\ (\ref{th11:eq2}) that
\begin{eqnarray}
\lambda_2-\lambda_3 \leq  \left(\lambda_2-\frac{\lambda_3}{2}\right)F_{22}-\left(\frac{\lambda_3}{2}-\lambda_4\right)F_{24}
\end{eqnarray}
must hold. Now, imposing that Eq.\ (\ref{5}) should be satisfied entails that
\begin{eqnarray}\label{th11:eq3}
2\lambda_2-3\lambda_3+(2\lambda_3-4\lambda_4)F_{24} &\leq & 0.
\end{eqnarray}
If we assume that $\lambda_3-2\lambda_4\leq 0$ is true, using that $F_{24}\leq \frac{1}{2}$ (Eq.~\eqref{5}) we then have that
\begin{eqnarray}
\lambda_2-\lambda_3-\lambda_4\leq 0
\end{eqnarray}
must hold. This proves the claim in the spectral region given by $B$.

Finally, if we on the other hand assume that $\lambda_3-2\lambda_4>0$ is true in Eq.\ (\ref{th11:eq3}), using that $F_{24}\geq0$ (Eq.~\eqref{3}) we then have that
\begin{eqnarray}
 2\lambda_2-3\lambda_3 &\leq & 0.
\end{eqnarray}
Thus, there cannot be a solution in the region $C$ either and this concludes the proof.
\end{proof}

\section{Conclusions}

Whereas Ref.~\cite{nomems} proved the impossibility of having a maximally entangled mixed state for a fixed spectrum in general by considering some particular eigenvalue distributions for rank-2 two-qubit states, in this work we have investigated whether this notion is possible at all beyond the case of pure states (i.e., rank equal to one). In the two-qubit case, this question boils down to the possibility of transforming the so-called MEMS of Eq.\ (\ref{mems}) to any other isospectral state employing LOCC protocols. We have generalized the technique used in \cite{nomems} based on NE convertibility to define a LP, whose infeasibility implies the impossibility of transforming the MEMS into a particular isospectral target state. We have also provided new techniques that make it possible to conclude that certain transformations from the MEMS are impossible by SEP maps. Since SEP and NE operations are relaxations of LOCC, whereby a negative answer precludes LOCC convertibility, this has allowed us to prove that a maximally entangled two-qubit state for a given spectrum cannot exist in all cases where the rank is two or three as well as for a considerable class of eigenvalue distributions in the case where the rank is four. While the presence of blue and green patches displayed in Fig.~\ref{fig:fig1} still leaves the question open in these spectral regions in the latter case, our results give a clear perspective on the problem at hand. They show that not only a notion of maximal entanglement for a fixed spectrum does not always exist, but that it never does in most cases. These findings indicate that even if it turns out that some eigenvalue distributions happen to permit such a notion beyond the case of pure states, they can only have a marginal impact.

Our results manifest that transformations under LOCC between mixed states share little analogy with the case of pure states and preclude generalizations of Nielsen's theorem even in restricted settings such as eigenvalue-preserving conversions. In addition to this, the techniques we have put forward might find application in the general problem of concluding that certain state transformations among mixed states cannot be implemented by LOCC. In particular, notice that the impossibility results of Sec.\ III do not require the isospectrality condition and that they even apply to stochastic LOCC (i.e., probabilistic LOCC transformations with non-zero probability of success) by considering instead trace-non-increasing maps. Finally, as discussed in \cite{nomems}, whenever a maximally entangled two-qubit state for a fixed spectrum does not exist, then there must exist an entanglement measure that has a different maximizer than the state of Eq.\ (\ref{mems}). However, it would be interesting to find such a measure with a clear operational meaning, i.e., a particular task of practical relevance where there is an isospectral state that outperforms the so-called MEMS.

\begin{acknowledgments}This project was made possible by the DLR Quantum Computing Initiative and the Federal Ministry for Economic Affairs and Climate Action; \url{qci.dlr.de/projects/IQDA}. J.I. de V. acknowledges financial support from the Spanish Ministerio de Ciencia e Innovaci\'on (grants PID2023-146758NB-I00 and PID2024-160539NB-I00, and ``Severo Ochoa Programme for Centres of Excellence" grant CEX2023-001347-S funded by MCIN/AEI/10.13039/501100011033) and from Comunidad de Madrid (grant QUITEMAD-CM TEC-2024/COM-84).
\end{acknowledgments}

\end{document}